\newcommand{\ver}{arxiv}
\newcommand{\appa}[1]{%
    \IfEqCase{#1}{%
        {conf}{\cite{arxiv}\xspace}%
        {arxiv}{Appendix~\ref{app:bound}\xspace}%
    }[\PackageError{appa}{Undefined option to app: #1}{}]%
}%
\newcommand{\appb}[1]{%
    \IfEqCase{#1}{%
        {conf}{\cite{arxiv}\xspace}%
        {arxiv}{Appendix~\ref{app:proofs}\xspace}%
    }[\PackageError{appb}{Undefined option to app: #1}{}]%
}%
\newcommand{\fs}{\mathcal{F}_s}
\newcommand{\rephrase}[2]{\noindent\textbf{#1}.~\emph{#2}}
\begin{document}
\title{On the Density of non-Simple 3-Planar Graphs\thanks{This work has been supported by DFG grant Ka812/17-1.}}
\author{Michael~A.~Bekos\inst{1}, Michael~Kaufmann\inst{1}, Chrysanthi~N.~Raftopoulou\inst{2}} 
\institute{Institut f\"ur Informatik, Universit\"at T\"ubingen, T\"ubingen, Germany\\
\texttt{$\{$bekos,mk$\}$@informatik.uni-tuebingen.de}
\and
School of Applied Mathematics \& Physical Sciences, NTUA, Athens, Greece\\
\texttt{crisraft@mail.ntua.gr}}


\maketitle

\begin{abstract}
A \emph{$k$-planar graph} is a graph that can be drawn in the plane
such that every edge is crossed at most $k$ times. For $k \leq 4$,
Pach and T\'oth~\cite{PachT97} proved a bound of $(k+3)(n-2)$ on the
total number of edges of a $k$-planar graph, which is tight for
$k=1,2$. For $k=3$, the bound of $6n-12$ has been improved to
$\frac{11}{2}n-11$ in~\cite{Pach2006} and has been shown to be
optimal up to an additive constant for simple graphs. In this paper,
we prove that the bound of $\frac{11}{2}n-11$ edges also holds for
non-simple $3$-planar graphs that admit drawings in which non-homotopic parallel edges
and self-loops are allowed. Based on this result, a characterization
of \emph{optimal $3$-planar graphs} (that is, $3$-planar graphs with
$n$ vertices and exactly $\frac{11}{2}n-11$ edges) might be
possible, as to the best of our knowledge the densest known simple
$3$-planar is not known to be optimal.
\end{abstract}

\section{Introduction}
\label{sec:introduction}
Planar graphs play an important role in graph drawing and
visualization, as the avoidance of crossings and occlusions is
central objective in almost all applications
\cite{BETT99,KW1999}. The theory of planar graphs
\cite{Harary91} could be very nicely applied and used for
developing great layout algorithms \cite{FPP90,Tamassia87,Tutte63}
based on the planarity concepts. Unfortunately, real-world graphs are
usually not planar despite of their sparsity. With this background,
an initiative has formed in recent years to develop a suitable theory
for \emph{nearly planar graphs}, that is, graphs with various
restrictions on their crossings, such as limitations on the number
of crossings per edge (e.g., $k$-planar graphs~\cite{Ringel65}), 
avoidance of local crossing configurations (e.g., quasi planar
graphs~\cite{AAPPS97}, fan-crossing free graphs~\cite{CHKK13},
fan-planar graphs~\cite{KU14}) or restrictions on the crossing angles
(e.g., RAC graphs~\cite{DEL11}, LAC graphs~\cite{DGMW11}). For
precise definitions, we refer to the literature mentioned above.

The most prominent is clearly the concept of $k$-planar graphs,
namely graphs that allow drawings in the plane such that each edge is
crossed at most $k$ times by other edges. The simplest case $k=1$,
i.e., $1$-planar graphs~\cite{Ringel65}, has been subject of
intensive research in the past and it is quite well understood, see
e.g.~\cite{BB0R15,Borodin95,Brandenburg14,BEGGHR12,GB07,PachT97}.
For $k \geq 2$, the picture is much less clear. Only few papers on
special cases appeared, see e.g., \cite{ABGH12,HN15}.

Pach and T\'oth's paper~\cite{PachT97} stands out and contributed a
lot to the understanding of nearly planar graphs. The paper considers
the number of edges in simple $k$-planar graphs for general $k$. Note
the well-known bound of $3n-6$ edges for planar graphs deducible from
Euler's formula. For small $k = 1,2,3$ and $4$, bounds of $4n-8$,
$5n-10$, $6n-12$ and $7n-14$ respectively, are proven which are~tight
for $k =1$ and $k=2$. This sequence seems to suggest a bound of
$O(kn)$ for general $k$, but Pach and T\'oth also gave an upper bound
of $4.1208 \sqrt k n$. Unfortunately, this bound is still quite large
even for medium $k$ (for $k=9$, it gives $12.36 n$). Meanwhile for
$k=3$ and $k=4$, the bounds above have been improved to $5.5n-11$ and
$6n-12$ in~\cite{Pach2006} and \cite{Ackerman15}, respectively. In
this paper, we prove that the bound on the number of edges for $k=3$
also holds for non-simple $3$-planar graphs that do not contain
homotopic parallel edges and homotopic self-loops. Our extension
required substantially different approaches and relies more on
geometric techniques than the more combinatorial ones given in
\cite{Pach2006} and \cite{Ackerman15}. We believe that it might also
be central for the characterization of \emph{optimal} $3$-planar
graphs (that is, $3$-planar graphs with $n$ vertices and exactly
$\frac{11}{2}n-11$ edges), since the densest known simple $3$-planar
graph has only $\frac{11n}{2}-15$ edges and does not reach the known
bound.

The remaining of this paper is structured as follows: Some
definitions and preliminaries are given in
Section~\ref{sec:preliminaries}. 
In Sections~\ref{sec:upper-bound} and~\ref{sec:density}, we give
significant insights in structural properties of $3$-planar graphs in
order to prove that $3$-planar graphs on $n$ vertices cannot have
more than $\frac{11}{2}n-11$ edges. We conclude in
Section~\ref{sec:conclusions} with open problems.

\section{Preliminaries}
\label{sec:preliminaries}

A \emph{drawing} of a graph $G$ is a representation of $G$ in the
plane, where the vertices of $G$ are represented by distinct points
and its edges by Jordan curves joining the corresponding pairs of
points, so that:%
\begin{inparaenum}[(i)]
\item no edge passes through a vertex different from its endpoints,
\item no edge crosses itself and
\item no two edges meet tangentially.
\end{inparaenum}
In the case where $G$ has multi-edges, we will further assume that
both the bounded and the unbounded closed regions defined by any
pair of self-loops or parallel edges of $G$ contain at least one
vertex of $G$ in their interior. Hence, the drawing of $G$ has no
\emph{homotopic} edges. In the following when referring to
$3$-planar graphs we will mean that non-homotopic edges are allowed in the corresponding drawings.
We call such graphs \emph{non-simple}.

Following standard naming conventions, we refer to a $3$-planar graph
with $n$ vertices and maximum possible number of edges as
\emph{optimal $3$-planar}. Let $H$ be an optimal $3$-planar graph on
$n$ vertices together with a corresponding $3$-planar drawing
$\Gamma(H)$. Let also $H_p$ be a subgraph of $H$ with the largest
number of edges, such that in the drawing of $H_p$ (that is inherited
from $\Gamma(H)$) no two edges cross each other. We call $H_p$ a
\emph{maximal planar substructure} of $H$. Among all possible optimal
$3$-planar graphs on $n$ vertices, let $G=(V,E)$ be the one with the
following two properties:%
\begin{inparaenum}[(a)]
\item \label{p:1} its maximal planar substructure, say $G_p=(V,E_p)$,
has maximum number of edges among all possible planar substructures
of all optimal $3$-planar graphs,
\item \label{p:2} the number of crossings in the drawing of $G$ is
minimized over all optimal $3$-planar graphs subject to~(\ref*{p:1}).
\end{inparaenum} 
We refer to $G$ as \emph{crossing-minimal optimal $3$-planar graph}.

With slight abuse of notation, let $G-G_p$ be obtained from $G$ by
removing only the edges of $G_p$ and let $e$ be an edge of $G-G_p$.
Since $G_p$ is maximal, edge $e$ must cross at least one edge of
$G_p$. We refer to the part of $e$ between an endpoint of $e$ and the
nearest crossing with an edge of $G_p$ as \emph{stick}. The parts of
$e$ between two consecutive crossings with $G_p$ are called
\emph{middle parts}. Clearly, $e$ consists of exactly $2$ sticks and
$0$, $1$, or $2$ middle parts. A stick of $e$ lies completely in a
face of $G_p$ and crosses at most two other edges of $G-G_p$ and an
edge of this particular face. A stick of $e$ is called \emph{short},
if there is a walk along the face boundary from the endpoint of the
stick to the nearest crossing point with $G_p$, which contains only
one other vertex of the face boundary. Otherwise, the stick of $e$ is
called \emph{long};  see Figure~\ref{fig:non_simple}. A middle part of
$e$ also lies in a face of $G_p$. We say that $e$ \emph{passes
through} a face of $G_p$, if there exists a middle part of $e$ that
completely lies in the interior of this particular face. We refer to
a middle part of an edge that crosses consecutive edges of a face of
$G_p$ as \emph{short middle part}. Otherwise, we call it \emph{far
middle part}.

\begin{figure}[t]
 \centering
    \begin{minipage}[b]{.19\textwidth}
        \centering
        \subfloat[\label{fig:non_simple}{}]
        {\includegraphics[width=\textwidth,page=1]{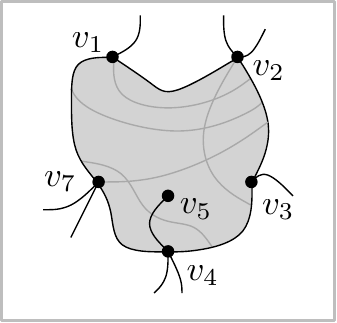}}
    \end{minipage}
	 \begin{minipage}[b]{.19\textwidth}
		\centering
		\subfloat[\label{fig:association}{}]
		{\includegraphics[width=\textwidth,page=1]{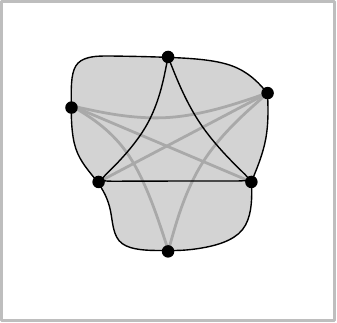}}
	 \end{minipage}
 \caption{
 (a)~Illustration of a non-simple face $\{v_1,v_2,\ldots,v_7\}$; $v_6$ is identified with $v_4$. 
 The sticks from $v_1$ and $v_2$ are short, while the one from $v_7$ is long. 
 All other edge segments are middle-parts.
 (b) The case, where two triangles of type $(3,0,0)$ are associated to the same triangle.}
 \label{fig:general}
\end{figure}

Let $\fs=\{v_1,v_2,\ldots,v_s\}$ be a face of $G_p$ with $s \geq
3$. The order of the vertices (and subsequently the order of the
edges) of $\fs$ is determined by a walk around the boundary of $\fs$
in clockwise direction. Since $\fs$ is not necessarily simple, a
vertex (or an edge, respectively) may appear more than once in this
order; see Figure~\ref{fig:non_simple}.  We say that $\fs$ is of type
$(\tau_1,\tau_2,\ldots,\tau_s)$ if for each $i=1,2,\ldots,s$ vertex
$v_i$ is incident to $\tau_i$ sticks of $\fs$ that lie between
$(v_{i-1},v_i)$ and $(v_i,v_{i+1})$\footnote{In the remainder of the
paper, all indices are subject to $(mod~s)+1$.}.

\begin{lemma}[Pach and T\'oth~\cite{PachT97}]
A triangular face of $G_p$ contains at most $3$ sticks.
\label{lem:no_of_sticks}
\end{lemma}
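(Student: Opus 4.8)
The plan is to bound the number of sticks that can be incident to a triangular face $\fs = \{v_1, v_2, v_3\}$ of $G_p$ by analyzing how sticks consume the ``crossing budget'' along the three boundary edges. Recall that each edge is crossed at most $3$ times (since $G$ is $3$-planar), and each stick of $\fs$, by definition, must cross exactly one of the three boundary edges of $\fs$ before reaching its first crossing with $G_p$. So the first observation is that the total number of sticks is at most the total crossing capacity of the three boundary edges, which naively gives $9$. The real work is to show that geometric constraints force this down to $3$.

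First I would fix one boundary edge, say $(v_1, v_2)$, and consider the sticks that enter the triangle by crossing it. Each such stick has its endpoint at one of $v_1$, $v_2$, $v_3$ (or at a repeated vertex, but for a genuinely triangular face I would first treat the simple case). The key geometric step is to argue that the sticks entering through a single boundary edge must be ordered along that edge, and that a stick emanating from a vertex of the triangle and crossing the opposite edge partitions the triangle in a way that restricts where subsequent sticks can go. I would then use a counting/parity argument: if a boundary edge is crossed by three sticks, those three crossings already exhaust its crossing budget, so no middle part of any edge of $G - G_p$ may additionally cross it, and I would show this forces the sticks to emanate from the vertices in a constrained pattern that cannot be replicated on the other two edges without creating a crossing configuration that either violates $3$-planarity or contradicts the crossing-minimality of $G$.

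The central argument, then, is an exchange/rerouting argument against the minimality hypotheses built into the definition of $G$: I would assume for contradiction that the face carries at least $4$ sticks and show that two of them, together with portions of the triangle's boundary, bound a region that lets me reroute an edge to reduce the number of crossings (contradicting property~(\ref*{p:2})) or to enlarge the planar substructure (contradicting property~(\ref*{p:1})). Concretely, two sticks whose endpoints and first crossings are ``nested'' along the boundary can be shown to cross each other or to admit a shortcut along the triangle boundary, and in either case the extremality of $G$ is violated.

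The hard part will be handling the fact that $\fs$ need not be a simple triangle: as the definition and Figure~\ref{fig:non_simple} emphasize, a vertex may be repeated on the face boundary, so I must rule out degenerate configurations where a ``triangle'' has identified vertices and the three boundary edges are not in general position. I expect the cleanest route is to treat the boundary walk combinatorially, count sticks by the boundary edge they first cross rather than by their originating vertex, and invoke the $3$-planarity bound edge by edge; the geometric rerouting lemma that excludes four sticks is where the substantive case analysis lives, and that is the step I would expect to consume most of the proof.
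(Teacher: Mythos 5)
Your plan misses the one fact that makes the paper's proof work, and the machinery you propose in its place is never actually supplied. The paper's argument is purely local and uses no extremality properties at all: (i) every stick emanating from a corner $v_i$ of a triangular face $\mathcal{T}$ must cross the side opposite to $v_i$, so $\tau_i \le 3$ by $3$-planarity of that side; (ii) two sticks emanating from \emph{different} corners of $\mathcal{T}$ are forced to cross each other inside $\mathcal{T}$ (the first stick separates the face into two regions, and the second stick's endpoint and its target side lie in different regions); (iii) each stick can be crossed at most twice inside $\mathcal{T}$, because the edge containing it already spends one of its three crossings on the edge of $G_p$ through which it exits. Facts (ii) and (iii) force $\tau_1+\tau_2+\tau_3\le 3$ by counting alone. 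Your proposal never states (ii); without it, the per-edge ``crossing budget'' bookkeeping you describe cannot get below $9$, as you yourself concede. What you offer instead is an exchange/rerouting argument against crossing-minimality of $G$ or maximality of $G_p$, but no concrete rerouting or edge swap is exhibited, so the declared central step of your proof is a promise rather than an argument. Moreover, the one concrete claim you do make --- that two ``nested'' sticks either cross each other or admit a shortcut, and \emph{in either case} the extremality of $G$ is violated --- is false as stated: two sticks crossing each other inside a face violates nothing (faces of type $(2,1,0)$, whose sticks cross, occur legitimately, e.g., in Lemma~\ref{lem:associate}).

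It is also worth pointing out where the genuine delicacy of this lemma lies, since your plan does not reach it. Even granting (i)--(iii), pure counting leaves open exactly one distribution, $(2,2,0)$: two sticks from each of two corners, pairwise crossing, so that every involved edge has exactly three crossings. This configuration is locally a perfectly valid $3$-planar drawing; it is excluded only because each of the two edges leaving the same corner, having exhausted its crossings, must end at a corner of the neighboring face, whence the two edges either end at the same vertex and bound a vertex-free lens, or one of them runs parallel to a side of $\mathcal{T}$ with a vertex-free lens --- in both cases producing homotopic parallel edges, which the paper's drawing conventions forbid. A correct write-up must dispose of this case explicitly; a rerouting argument against crossing-minimality, which is what you propose, does not obviously do so, since the $(2,2,0)$ configuration admits no local redrawing with fewer crossings.
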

\begin{proof}
Consider a triangular face $\mathcal{T}$ of $G_p$ of type
$(\tau_1,\tau_2,\tau_3)$. Clearly, $\tau_1,\tau_2,\tau_3 \leq 3$, as
otherwise an edge of $G_p$ has more than three crossings. Since a
stick of $\mathcal{T}$ cannot cross more than two other sticks of
$\mathcal{T}$, it follows that $\tau_1+\tau_2+\tau_3 \leq 3$.\qed
\end{proof}

\section{The Density of non-Simple 3-Planar Graphs}
\label{sec:upper-bound}
Let $G=(V,E)$ be a crossing-minimal optimal $3$-planar graph with $n$
vertices drawn in the plane. Let also $G_p=(V,E_p)$ be the maximal
planar substructure of $G$. In this section, we will prove that $G$
cannot have more than $\frac{11n}{2}-11$ edges, assuming that $G_p$
is fully triangulated, i.e., $|E_p| = 3n-6$. This assumption will
be proved in Section~\ref{sec:density}. Next, we prove that the
number of triangular faces of $G_p$ with exactly $3$ sticks cannot be
larger than those with at most $2$ sticks.

\begin{lemma}
We can uniquely associate each triangular face of $G_p$ with $3$
sticks to a neighboring triangular face of $G_p$ with at most $2$
sticks.
\label{lem:associate}
\end{lemma}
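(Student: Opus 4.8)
The plan is to define an explicit map from the triangular faces of $G_p$ with three sticks to those with at most two sticks, and then argue that this map is injective. First I would enumerate, using Lemma~\ref{lem:no_of_sticks}, the configurations that a triangular face $\mathcal{T}=(v_1,v_2,v_3)$ with three sticks can realize. Since $\tau_1+\tau_2+\tau_3=3$ and a stick cannot cross more than two other sticks, the admissible types are, up to the rotational symmetry of the triangle, $(3,0,0)$, $(2,1,0)$, and $(1,1,1)$. For each type I would record, for each of the three sticks, through which of the edges $(v_1,v_2)$, $(v_2,v_3)$, $(v_3,v_1)$ it leaves $\mathcal{T}$, and how many crossings each of these edges therefore already carries. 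The governing constraint throughout is that every edge of $G_p$ is crossed at most three times.

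Next, for each type I would single out a distinguished edge $e$ of $\mathcal{T}$ and associate $\mathcal{T}$ with the neighboring triangular face $\mathcal{T}'$ that shares $e$. The natural choice for $e$ is the edge carrying the most stick-crossings of $\mathcal{T}$; for type $(3,0,0)$ this is the edge opposite $v_1$, which is crossed exactly three times and hence saturated. I would then show that $\mathcal{T}'$ cannot also carry three sticks: a saturated shared edge leaves no crossing budget for any stick of $\mathcal{T}'$ to exit through $e$, and a short counting argument, together with the minimality of the number of crossings in $\Gamma(G)$ and the maximality of $G_p$, rules out a second three-stick configuration on the far side. This would establish that every face with three sticks is associated with a neighbor having at most two sticks.

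The remaining and genuinely delicate step is uniqueness: no face with at most two sticks may be the image of two distinct three-stick faces. The only way this can occur is the configuration depicted in Figure~\ref{fig:association}, where two faces of type $(3,0,0)$ send their saturated edges to two different edges of the same target triangle $\mathcal{T}'$. I expect this to be the main obstacle. The plan is to show that such a coincidence forces two of the three edges of $\mathcal{T}'$ to be crossed three times simultaneously, which either violates $3$-planarity once the sticks lying inside $\mathcal{T}'$ are also accounted for, or permits a rerouting of one bundle of sticks that strictly decreases the number of crossings (contradicting crossing-minimality) or enlarges $G_p$ (contradicting property~(\ref{p:1})). Where no such contradiction is available, I would instead break the tie by reassigning one of the two offending faces across one of its non-saturated edges, and then verify that the opposite neighbor has at most two sticks and is otherwise unclaimed. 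Carrying out this case analysis cleanly, and in particular showing that such a fallback neighbor always exists and is never itself double-claimed, is where the bulk of the work lies.
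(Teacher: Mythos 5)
Your overall skeleton (enumerate the types from Lemma~\ref{lem:no_of_sticks}, associate across a distinguished edge, then repair non-injectivity by reassignment) matches the paper's proof, but there is a genuine gap at the type $(1,1,1)$. Your framework breaks down there: in a $(1,1,1)$ face the three sticks mutually cross and each crosses one side of $\mathcal{T}$, so every side carries exactly one stick-crossing; there is no saturated edge, your tie-breaking rule is undefined, and the ``no crossing budget'' argument gives nothing. Worse, each of the three sticks already has three crossings inside $\mathcal{T}$, so the edge containing it must terminate at the apex vertex of the adjacent triangle; hence all three neighbors of $\mathcal{T}$ have at least one stick, and nothing in your plan prevents all of them from having three, leaving $\mathcal{T}$ with no admissible image. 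The paper resolves this by a completely different, global argument: type $(1,1,1)$ cannot occur at all, because the three neighboring triangles together with $\mathcal{T}$ form a hexagon of $G_p$ containing only six interior edges, which can be deleted and replaced by eight edges drawn inside the hexagon (Figure~\ref{fig:association}), contradicting the optimality of $G$. This rewiring idea is absent from your proposal and cannot be recovered from local saturation counting.

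Two further points. For type $(2,1,0)$ your rule selects the wrong edge: the two sticks at $v_1$ cross the side opposite $v_1$ twice, so ``most stick-crossings'' picks that side, whereas the paper associates across the side crossed by the single stick at $v_2$ --- precisely because that stick is forced to cross both sticks at $v_1$ and hence already has three crossings inside $\mathcal{T}$, so it terminates at a vertex of $\mathcal{T}'$; this is what guarantees $\mathcal{T}'$ has no sticks at the shared corners (at most one stick in total, at its third vertex) and is also what makes uniqueness immediate for this type (a face claimed by a $(2,1,0)$ face cannot have two sides each crossed twice, so it cannot be claimed twice). With your choice, the shared edge is not saturated and the at-most-two-sticks bound would need a separate geometric analysis you have not supplied; note that even in the $(3,0,0)$ case, saturation of the shared edge only rules out sticks of $\mathcal{T}'$ exiting through that edge, and bounding the sticks of $\mathcal{T}'$ by two still requires tracking where the three continuations travel inside $\mathcal{T}'$. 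Finally, on uniqueness your primary hope of deriving a contradiction from $3$-planarity or crossing-minimality is futile: the double claim by two $(3,0,0)$ faces is realizable (again Figure~\ref{fig:association}), so only your fallback reassignment can work, and the substantive step --- showing that $3$-planarity forces a nearby face with exactly two sticks that no other face claims --- is exactly what the paper proves and what your proposal leaves open.
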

\begin{proof}
Let $\mathcal{T}=\{v_1,v_2,v_3\}$ be a triangular face of $G_p$. By
Lemma~\ref{lem:no_of_sticks}, we have to consider three types for
$\mathcal{T}$: $(3,0,0)$, $(2,1,0)$ and $(1,1,1)$.
\begin{itemize}
\item \emph{$\mathcal{T}$ is of type $(3,0,0)$:}
Since $v_1$ is incident to $3$ sticks of $\mathcal{T}$, edge
$(v_2,v_3)$ is crossed three times. Let $\mathcal{T}'$ be the
triangular face of $G_p$ neighboring $\mathcal{T}$ along
$(v_2,v_3)$. We have to consider two cases:%
\begin{inparaenum}[(a)]
\item \label{a:c1} one of the sticks of $\mathcal{T}$ ends at a corner of $\mathcal{T}'$, and
\item \label{a:c2} none of the sticks of $\mathcal{T}$ ends at a corner of $\mathcal{T}'$.
\end{inparaenum}
In Case~(\ref*{a:c1}), the two remaining sticks of $\mathcal{T}$
might use the same or different sides of $\mathcal{T}'$ to exit it.
In both subcases, it is not difficult to see that $\mathcal{T}'$ can
have at most two sticks. In Case~(\ref*{a:c2}), we
again have to consider two subcases, depending on whether all sticks
of $\mathcal{T}$ use the same side of $\mathcal{T}'$ to pass through
it or two different ones. In the former case, it is not difficult to
see that $\mathcal{T}'$ cannot have any stick, while in the later
$\mathcal{T}'$ can have at most one stick. In all aforementioned
cases, we associate $\mathcal{T}$ with $\mathcal{T}'$.

\item \emph{$\mathcal{T}$ is of type $(2,1,0)$:}
Since $v_2$ is incident to one stick of $\mathcal{T}$, edge
$(v_1,v_3)$ is crossed at least once. We associate $\mathcal{T}$
with the triangular face $\mathcal{T}'$ of $G_p$ neighboring
$\mathcal{T}$ along $(v_1,v_3)$. Since the stick of $\mathcal{T}$
that is incident to $v_2$ has three crossings in $\mathcal{T}$,
$\mathcal{T}'$ has no sticks emanating from $v_1$ or $v_3$. In
particular, $\mathcal{T}'$ can have at most one additional stick
emanating from its third vertex.

\item \emph{$\mathcal{T}$ is of type $(1,1,1)$:} This actually 
cannot occur. Indeed,
if $\mathcal{T}$ is of type $(1,1,1)$, then all sticks of
$\mathcal{T}$ have already three crossings each. Hence, the three
triangular faces adjacent to $\mathcal{T}$ define a $6$-gon in
$G_p$, which contains only six interior edges. So, we
can easily remove them and replace them with $8$ interior edges
(see, e.g., Figure~\ref{fig:association}), contradicting thus the
optimality of~$G$.
\end{itemize}
Note that our analysis also holds for non-simple triangular faces.
We now show that the assignment is unique. This
holds for triangular faces of type $(2,1,0)$, since a triangular face
that is associated with one of type $(2,1,0)$ cannot contain two
sides each with two crossings, which implies that it cannot be
associated with another triangular face with three sticks. This
leaves only the case that two $(3,0,0)$ triangles are associated with
the same triangle $\mathcal{T}'$ (see, e.g., the triangle with the
gray-colored edges in Figure~\ref{fig:association}). In this case,
there exists another triangular face (bottommost in
Figure~\ref{fig:association}), which has exactly two sticks because
of $3$-planarity. In addition, this face cannot be associated with
some other triangular face. Hence, one of the two type-$(3,0,0)$
triangular faces associated with $\mathcal{T}'$ can be assigned to
this triangular face instead resolving the conflict.\qed
\end{proof}
\noindent We are now ready to prove the main theorem of this section.

\begin{theorem}
A $3$-planar graph of $n$ vertices has at most $\frac{11}{2}n -11$
edges, which is a tight bound.
\label{thm:main}
\end{theorem}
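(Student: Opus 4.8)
The plan is to use a discharging/counting argument based on the structure of the maximal planar substructure $G_p$, which we assume is fully triangulated with exactly $3n-6$ edges and hence $2n-4$ triangular faces by Euler's formula. The total edge count of $G$ is $|E_p|$ plus the number of edges in $G-G_p$. Since every edge of $G-G_p$ consists of exactly two sticks (plus some middle parts), the number of such edges is exactly half the total number of sticks distributed among the faces of $G_p$. Thus I would write $|E| = |E_p| + \frac{1}{2}(\text{total number of sticks})$, and the whole problem reduces to bounding the total number of sticks by a linear function of $n$.

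By Lemma~\ref{lem:no_of_sticks}, each triangular face of $G_p$ carries at most $3$ sticks, so a naive bound gives at most $3(2n-4)$ sticks, yielding roughly $6n-12$ edges; this is too weak. To sharpen it, I would invoke Lemma~\ref{lem:associate}: every triangular face with exactly $3$ sticks is uniquely associated with a neighboring triangular face carrying at most $2$ sticks. Consequently, if we let $F_3$ denote the set of faces with $3$ sticks, the injection provided by the lemma pairs each face in $F_3$ with a distinct face carrying at most $2$ sticks, so $|F_3|$ is at most the number of faces with at most $2$ sticks. The key consequence is that the number of faces with $3$ sticks cannot exceed half the total number of faces, i.e.\ at most $n-2$. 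I would then bound the total stick count as $3|F_3| + 2(\text{remaining faces})$ and simplify, the averaging effect of the association bringing the average number of sticks per face down from $3$ toward $\frac{5}{2}$.

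Carrying out the arithmetic: there are $2n-4$ triangular faces, of which at most $n-2$ can carry $3$ sticks and the rest carry at most $2$ sticks each. This gives a total stick bound of $3(n-2) + 2(n-2) = 5(n-2) = 5n-10$, so the number of edges in $G-G_p$ is at most $\frac{1}{2}(5n-10)$. Adding $|E_p| = 3n-6$ then yields $|E| \le 3n-6 + \frac{5n-10}{2} = \frac{11n-22}{2} = \frac{11}{2}n - 11$, exactly the claimed bound.

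The main obstacle I anticipate lies not in the counting itself but in justifying the reduction to the fully triangulated case, which the excerpt explicitly defers to Section~\ref{sec:density}; within the present section the worry is whether the association of Lemma~\ref{lem:associate} truly partitions the faces cleanly, since a face targeted by the association might itself carry sticks that need accounting. I would verify that each associated neighbor is charged at most once and that its own stick contribution is already subsumed in the ``at most $2$ sticks'' count, so no double-counting inflates the total. Finally, tightness must be addressed: I would exhibit (or point to) a concrete $3$-planar drawing attaining $\frac{11}{2}n-11$ edges, relying on the non-simple construction permitted by allowing non-homotopic parallel edges and self-loops, which is precisely where the extra density over the densest known \emph{simple} $3$-planar graph comes from.
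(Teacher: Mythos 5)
Your proposal is correct and takes essentially the same route as the paper's own proof: assume the triangulated substructure $G_p$ (deferred to Section~\ref{sec:density}), use Lemma~\ref{lem:associate} to conclude that at most half of the $2n-4$ triangular faces carry $3$ sticks, count two sticks per edge of $G-G_p$, and compute $|E| \le 3n-6 + \frac{5n-10}{2} = \frac{11}{2}n-11$, with tightness via an explicit construction. The only cosmetic difference is that you charge every non-$3$-stick face with two sticks outright, whereas the paper keeps the exact counts $t_0,t_1,t_2$ before relaxing; both relaxations yield the same bound.
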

\begin{proof}
Let $t_i$ be the number of triangular faces of $G_p$ with exactly $i$
sticks, $0 \leq i \leq 3$. The argument starts by counting
the number of triangular faces of $G_p$ with exactly $3$ sticks. From
Lemma~\ref{lem:associate}, we conclude that the number $t_3$ of
triangular faces of $G_p$ with exactly $3$ sticks is at most as large
as the number of triangular faces of $G_p$ with $0$, $1$ or $2$
sticks. Hence $t_3 \leq t_0 + t_1 + t_2$. We conclude that
$t_3 \leq t_p/2$, where $t_p$ denotes the number of triangular faces
in $G_p$, since $t_0 + t_1 + t_2 + t_3 = t_p$. Note that by Euler's
formula $t_p = 2n-4$. Hence, $t_3 \leq n-2$. Thus, we have: 
$|E| - |E_p| = (t_1 + 2t_2 + 3t_3)/2 = (t_1 + t_2 + t_3) + (t_3 -
t_1)/2 = (t_p - t_0) + (t_3 - t_1)/2 \leq t_p + t_3/2 \leq 5t_p/4$. 
So, the total number of edges of $G$ is at most: 
$|E| \leq |E_p| + 5t_p/4 \leq 3n - 6 + 5(2n - 4)/4 = 11n/2 -
11$. In~\appa{\ver} we prove that our bound is tight by a
construction similar to the one of Pach et al.~\cite{Pach2006}.\qed
\end{proof}

\section{The Density of the Planar Substructure}
\label{sec:density}
Let $G=(V,E)$ be a crossing-minimal optimal $3$-planar graph with $n$
vertices drawn in the plane. Let also $G_p=(V,E_p)$ be the maximal
planar substructure of $G$. In this section, we will prove that $G_p$
is fully triangulated, i.e., $|E_p| = 3n-6$ (see
Theorem~\ref{thm:triangulated}). To do so, we will explore several
structural properties of $G_p$ (see
Lemmas~\ref{lem:stick_cross}-\ref{lem:two_stick}), assuming that
$G_p$ has at least one non-triangular face, say $\fs=\{v_1, v_2,
\ldots, v_s\}$ with $s \geq 4$. In the first observations, we do not
require that $G_p$ is connected. This is proved in
Lemma~\ref{lem:connected}. Recall that in general $\fs$ is not
necessarily simple, which means that a vertex may appear more than
once along $\fs$. Our goal is to contradict either the
\emph{optimality} of~$G$ (that is, the fact that $G$ contains the
maximum number of edges among all $3$-planar graphs with $n$
vertices) or the \emph{maximality} of~$G_p$ (that is, the fact that
$G_p$ has the maximum number of edges among all planar substructures
of all optimal $3$-planar graphs with $n$ vertices) or the
\emph{crossing minimality} of~$G$ (that is, the fact that $G$ has the
minimum number of crossings subject to the size of the planar
substructure).

\newcommand{\stickcross}{Let $\fs=\{v_1, v_2, \ldots, v_s\}$, $s
\geq 4$ be a non-triangular face of $G_p$. Then, each stick of $\fs$
is crossed at least once within $\fs$.}

\begin{lemma}
\stickcross
\label{lem:stick_cross}
\end{lemma}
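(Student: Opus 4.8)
The plan is to argue by contradiction, using the maximality of $G_p$ together with the minimality of the crossing number. Suppose some stick of the face $\fs$ is not crossed within $\fs$. Say the stick emanates from vertex $v_i$ and begins in the interior of $\fs$. By the definition of a stick, it runs from an endpoint of the corresponding edge $e \in G-G_p$ to the nearest crossing with an edge of $G_p$; since $\fs$ is the face of $G_p$ containing the endpoint, that crossing must occur on an edge bounding $\fs$. The working assumption ``not crossed within $\fs$'' means the stick crosses no other edge of $G-G_p$ before it exits $\fs$ by crossing a boundary edge. I would then isolate this uncrossed stick as a Jordan arc that, together with a portion of the boundary walk of $\fs$, encloses a region of the plane.

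The key step is to reroute. Because the stick traverses the face $\fs$ without being crossed, I would argue that the edge $e$ can be redrawn so that its relevant portion is pushed arbitrarily close to the boundary of $\fs$, i.e., routed essentially alongside the boundary edges rather than cutting across the interior. The first thing to check is that this rerouting does not create any new crossings: since nothing crosses the stick inside $\fs$, sliding it toward the boundary sweeps a crossing-free region, so the new curve meets the same set of edges as before, or fewer. Next I would observe that after rerouting, the relevant boundary edge of $\fs$ is no longer genuinely crossed by $e$ in an essential way — either the crossing can be removed entirely, strictly decreasing the total number of crossings, or the edge $e$ can be absorbed so as to coincide (up to homotopy) with a boundary edge, enlarging the crossing-free subgraph. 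In the former case we contradict the crossing-minimality of $G$ (property~(\ref*{p:2})); in the latter we contradict the maximality of $G_p$ (property~(\ref*{p:1})), since we would have produced a planar substructure with more edges.

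I expect the main obstacle to be making the rerouting rigorous in the non-simple setting, where $\fs$ need not be a simple polygon and a vertex may appear more than once along the boundary walk (as in Figure~\ref{fig:non_simple}). In particular, I must be careful that the region swept during the rerouting contains no vertex of $G$ in its interior, so that the no-homotopic-edges condition from the preliminaries is not violated and no edge is forced to pass through a vertex. The second delicate point is the bookkeeping of crossings along the stick's new path: I would need to confirm that, on the boundary side, the stick previously accounted for a crossing that is now eliminated, and that this net change is a strict decrease rather than a mere redistribution. Once the swept region is shown to be empty of vertices and the crossing count is shown to strictly drop (or $|E_p|$ to strictly increase), the contradiction follows and every stick of $\fs$ must be crossed at least once within $\fs$.
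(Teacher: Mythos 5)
Your rerouting argument has a genuine gap at its central step: the claim that, because nothing crosses the stick inside $\fs$, ``sliding it toward the boundary sweeps a crossing-free region.'' The hypothesis only says that the stick itself is uncrossed; the region bounded by the stick and the portion of the boundary walk of $\fs$ toward which you slide can still contain many other edge segments --- sticks emanating from the intermediate boundary vertices and middle parts of edges passing through $\fs$ --- none of which needs to cross your stick, since they enter and leave that region through the face boundary rather than through the stick. Pushing the stick against the boundary therefore forces it to cross all of these segments, so the rerouted edge may acquire far more than three crossings, and you obtain no contradiction with either crossing-minimality or the maximality of $G_p$. These blocking edges are precisely what the paper's proof is about: if no edge crossed $(v_i,v_{i+1})$ between the stick's crossing point and $v_i$, then the whole edge $(v_1,v_1')$ could be exchanged for the chord $(v_1,v_i)$ --- a replacement in the abstract graph, not a redrawing of the same edge --- contradicting the maximality of $G_p$; hence blockers $e_1,e_2$ exist, $3$-planarity forces $i+1=s$ (otherwise $(v_i,v_{i+1})$ would carry five crossings), one shows $e_1,e_2$ are not sticks from $v_1$, and a final three-for-three swap (remove $e_1$, $e_2$, $(v_1,v_1')$; insert the chord $(v_1,v_{s-1})$ plus two edges that are sticks at $v_1$ or at $v_s$, one of the two configurations being realizable exactly because $e_1$ and $e_2$ are not homotopic) produces an optimal graph with a strictly larger planar substructure.

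Your endgame is also not available as stated. After any rerouting, the edge must still leave $\fs$ to reach its second endpoint (its first crossing with $G_p$ is what defines the stick), so the crossing with the boundary is redistributed, not deleted --- you anticipated this worry but propose no mechanism that resolves it. And ``absorbing'' the edge so that it coincides up to homotopy with a boundary edge is exactly what the model forbids (no homotopic parallel edges), besides requiring the endpoint $v_1'$ to coincide with an endpoint of that boundary edge, which you cannot assume. The repair is to abandon the purely continuous deformation of a single edge and argue, as the paper does, by deleting edges of $G$ and inserting different vertex-to-vertex edges, then comparing the sizes of the planar substructures of the two optimal graphs.
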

\begin{proof}[Sketch]
Assume to the contrary that there exists a stick of $\fs$ that is
not crossed within $\fs$. W.l.o.g.~let $(v_1,v_1')$ be the edge
containing this stick and assume that $(v_1,v_1')$ emanates from
vertex $v_1$ and leads to vertex $v_1'$ by crossing the edge
$(v_i,v_{i+1})$ of $\fs$. We initially prove that $i+1=s$. Next, we
show that there exist two edges $e_1$ and $e_2$ which cross
$(v_i,v_{i+1})$ and are not sticks emanating from $v_1$. The desired
contradiction follows from the observation that we can remove edges
$e_1$, $e_2$ and $(v_1,v_1')$ from $G$ and replace them with the
chord $(v_1,v_{s-1})$ and two additional edges that are both sticks
either at $v_1$ or at $v_s$. In this way, a new graph is obtained,
whose maximal planar substructure has more edges than $G_p$, which
contradicts the maximality of $G_p$. The detailed proof is given 
in~\appb{\ver}.\qed
\end{proof}

\newcommand{\ptnofar}{Let $\fs=\{v_1, v_2, \ldots, v_s\}$, $s \geq
4$ be a non-triangular face of $G_p$. Then, each middle part of
$\fs$ is short, i.e., it crosses consecutive edges of $\fs$.}

\begin{lemma}
\ptnofar
\label{lem:pt_no_far}
\end{lemma}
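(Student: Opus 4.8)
The plan is to argue by contradiction, assuming that some middle part $m$ of an edge crossing through $\fs$ is \emph{far}, meaning it enters $\fs$ by crossing one side $(v_i,v_{i+1})$ and exits by crossing a non-consecutive side $(v_j,v_{j+1})$ with $j \neq i-1, i, i+1$. The goal is to contradict one of the three extremal properties of $G$: the maximality of $G_p$, the optimality of $G$, or the crossing-minimality of $G$. Since Lemma~\ref{lem:stick_cross} guarantees that every stick inside $\fs$ is crossed, I expect the relevant contradiction to come either from rerouting edges so that the planar substructure grows, or from adding a chord of $\fs$ to $G_p$ without introducing new crossings.

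First I would set up the geometry carefully. The far middle part $m$ is part of some edge $e$ of $G-G_p$, and it partitions the interior of $\fs$ into two regions, say $R_1$ and $R_2$, each bounded by $m$ together with a contiguous portion of the boundary walk of $\fs$. Because $\fs$ is a face of the \emph{planar} substructure $G_p$, no edge of $G_p$ enters the interior of $\fs$; hence the chord of $\fs$ joining two boundary vertices lying on opposite sides of $m$ is a candidate new planar edge. The natural first step is to show that at least one such chord, say $(v_p, v_q)$ with $v_p$ on the boundary of $R_1$ and $v_q$ on the boundary of $R_2$, can be drawn inside $\fs$ crossing only $m$ and possibly a controlled number of other middle parts or sticks. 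Using the $3$-planarity bound on the number of times any edge of $\fs$ is crossed (at most three), I would bound how many segments of $G-G_p$ can lie in each region.

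The heart of the argument, and what I expect to be the main obstacle, is to turn the far middle part $m$ into a local reconfiguration that strictly improves one of the extremal quantities. Since $m$ spans a ``long'' arc of the boundary, the portion of $\fs$ cut off by $m$ on the shorter side contains at least one full side $(v_k, v_{k+1})$ that is \emph{not} crossed by $m$ at its endpoints; the key claim I would try to establish is that this geometry forces either a redundant edge that can be rerouted to lie crossing-free (contradicting maximality of $G_p$), or a crossing that can be eliminated (contradicting crossing-minimality). Concretely, I would examine the edge $e$ carrying $m$: since $m$ is far, $e$ uses up crossings traversing a long way across $\fs$; I would argue that by replacing $e$ (or one of its sticks together with $m$) by a chord of $\fs$ added to $G_p$, the planar substructure gains an edge while the drawing remains valid, contradicting property~(\ref{p:1}). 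Handling the possibility that the chord itself would create homotopic parallel edges, and the bookkeeping of how many sticks and middle parts of other edges obstruct the chord, is where the delicate case analysis lies.

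The principal difficulty is that $\fs$ need not be simple, so a single vertex $v_i$ may appear several times along the boundary walk, and the two ``sides'' of $m$ need not correspond to disjoint vertex sets; the chord I wish to add might therefore coincide with an existing edge or create a homotopic parallel copy, which is explicitly forbidden. To overcome this I would, as in Lemma~\ref{lem:stick_cross}, first reduce to the case where $m$ crosses a specific pair of boundary edges in a canonical position (for instance, forcing one of the crossed sides to be incident to a designated vertex), and only then perform the rerouting. I expect the detailed verification that the new chord is neither already present nor homotopic to an existing edge to require the same careful geometric treatment used in the preceding lemma, and the full argument would be deferred to \appb{\ver}.
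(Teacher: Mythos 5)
Your high-level strategy (contradiction via a local rerouting that enlarges the planar substructure) matches the paper's, but the concrete mechanism you propose does not work, and the step that replaces it is the actual content of the paper's proof. You propose to delete the far edge $e=(u,u')$ and add in its place a chord of $\fs$ (or a rerouted copy of $e$). The obstruction is that each of the two crossed non-consecutive sides, say $(v_1,v_2)$ and $(v_i,v_{i+1})$, may be crossed by up to two further edges of $G-G_p$, and $e$ itself may be crossed once more inside $\fs$. Any chord joining the two shores of the removed middle part must thread past these segments, so in general it is not crossing-free (hence it cannot be added to $G_p$, and maximality is not contradicted), and the crossings it creates land on edges that may already have three crossings, so the resulting drawing need not even be $3$-planar. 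Nor can a one-for-one replacement be salvaged through crossing-minimality: the chord can pick up as many crossings as $e$ had. Your proposal flags the ``bookkeeping of how many sticks and middle parts obstruct the chord'' as the delicate part but leaves it unresolved; that bookkeeping is exactly where the proof lives, and resolving it requires more than choosing the chord's endpoints carefully.

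The paper's proof resolves it by a simultaneous multi-edge exchange rather than a single replacement: it removes $e$ together with \emph{all} edges $e_1,e_1',e_2,e_2'$ that cross $(v_1,v_2)$ and $(v_i,v_{i+1})$, and reconnects the dangling endpoints with an equal number of new edges anchored at the endpoint $u$ of the far edge. For instance, when $e$ is uncrossed inside $\fs$, the new edges are $(u,v_i)$, $(u,v_{i+1})$, and three edges from $v_1$, $v_2$, $u$ to the endpoints hanging below $(v_i,v_{i+1})$; a second case (when $e$ is crossed inside $\fs$ by some edge) uses a similar exchange centered at both $u$ and $u'$. Lemma~\ref{lem:stick_cross} enters in a way your outline does not exploit: it guarantees that $e_1,e_1'$ have at most one crossing above $(v_1,v_2)$ (and $e_2,e_2'$ at most one below $(v_i,v_{i+1})$), which is what makes the reconnection $3$-planar. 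After the exchange, the planar substructure is re-based by trading $(v_1,v_2)$ for the two now-uncrossed edges $(u,v_i)$ and $(u,v_{i+1})$, a net gain of one edge, contradicting property~(\ref{p:1}). Without this idea of removing and reconnecting all obstructing edges at once, the argument cannot be completed as you outline it.
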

\begin{proof}[Sketch]
For a proof by contradiction, assume that $(u,u')$ is an edge that
defines a middle part of $\fs$ which crosses two non-consecutive
edges of $\fs$, say w.l.o.g.~$(v_1,v_2)$ and $(v_i,v_{i+1})$, where
$i \neq 2$ and $ i+1 \neq s$. We distinguish two main cases. Either
$(u,u')$ is not involved in crossings in the interior of $\fs$ or
$(u,u')$ is crossed by an edge, say $e$, within $\fs$. In both cases,
it is possible to lead to a contradiction to the maximality of $G_p$;
refer to~\appb{\ver} for more~details.\qed
\end{proof}

\begin{lemma}
Let $\fs=\{v_1, v_2, \ldots, v_s\}$, $s \geq
4$ be a non-triangular face of $G_p$. Then, each stick of
$\mathcal{F}_s$ is short.
\label{lem:stick_no_far}
\end{lemma}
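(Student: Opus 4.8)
The plan is to argue by contradiction, reusing the reroute-and-enlarge strategy that underlies Lemmas~\ref{lem:stick_cross} and~\ref{lem:pt_no_far}. Suppose $\fs$ has a \emph{long} stick $\sigma$, which is part of an edge $(v_1,v_1')$ emanating from $v_1$ and whose nearest crossing with $G_p$ is a point $p$ on an edge $(v_i,v_{i+1})$ of $\fs$. By the definition of a long stick, the boundary walk from $v_1$ to $p$ contains at least two other vertices of $\fs$ in \emph{both} directions; hence $3\le i\le s-2$, which in particular forces $s\ge 5$ (for $s=4$ every stick crosses $(v_2,v_3)$ or $(v_3,v_4)$ and is therefore short, so nothing is to prove). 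The curve $\sigma$ together with $p$ splits the interior of $\fs$ into a region $L$ whose boundary carries $v_2,\dots,v_i$ and a region $R$ whose boundary carries $v_{i+1},\dots,v_s$; by longness each of $L$ and $R$ contains at least two vertices of $\fs$.

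First I would pin down the edges interacting with $\sigma$. By Lemma~\ref{lem:stick_cross}, $\sigma$ is crossed within $\fs$, and since a stick crosses at most two edges of $G-G_p$, there are exactly one or two such edges, say $e_1$ and possibly $e_2$. Each of them must pass from $L$ to $R$ in order to cross $\sigma$, so it re-meets the boundary of $\fs$; using Lemma~\ref{lem:pt_no_far} I would classify the portion of each $e_j$ lying in $\fs$ as either a stick or a \emph{short} middle part cutting off a single corner, and thereby localize where it meets the boundary and the edge $(v_i,v_{i+1})$. Combined with the fact that $(v_i,v_{i+1})$ is crossed at most three times (one crossing being $p$ itself), this bounds how many edges meet the stretch of $(v_i,v_{i+1})$ between $p$ and the nearer of $v_i,v_{i+1}$.

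Then I would carry out the reroute. Choosing $v_i$ or $v_{i+1}$ on the less obstructed side, I replace $(v_1,v_1')$ together with the one or two edges $e_1,e_2$ crossing $\sigma$ by the chord $(v_1,v_i)$ (say), drawn to hug $\sigma$ and to terminate at $v_i$ rather than crossing $(v_i,v_{i+1})$, plus one or two compensating edges reinserted as sticks at $v_1$ or $v_i$. Since we delete exactly as many edges as we add, the new drawing is again that of an optimal $3$-planar graph on $n$ vertices (one must check that every edge still has at most three crossings); and once $e_1,e_2$ are removed the chord $(v_1,v_i)$ is crossing-free, so it joins the planar substructure and produces one strictly larger than $G_p$, contradicting the maximality of~$G_p$.

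The main obstacle is the middle step: guaranteeing that the new chord can be drawn crossing \emph{only} the at most two removed edges. Running parallel to $\sigma$ it inherits exactly $\sigma$'s crossings, but its final stretch from a neighborhood of $p$ to the vertex $v_i$ runs alongside $(v_i,v_{i+1})$ and could a priori meet further sticks or short middle parts incident to the vertices carried by $L$ (or $R$). Excluding this is where Lemmas~\ref{lem:stick_cross} and~\ref{lem:pt_no_far}, together with a careful $3$-planarity-driven case analysis of the sub-region, are needed, and it is also where the non-simplicity of $\fs$ (a vertex may occur several times along its boundary) must be treated with care. The detailed argument is deferred to~\appb{\ver}.
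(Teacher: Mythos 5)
Your overall strategy---contradicting the maximality of $G_p$ by threading a chord from $v_1$ to $v_i$ or $v_{i+1}$ along the long stick---is the same as the paper's, but the step you defer (``guaranteeing that the new chord can be drawn crossing \emph{only} the at most two removed edges'') is not a technicality that a further case analysis can exclude: it is the actual content of the proof, and your deletion scheme cannot get past it. The paper's first move is to observe that if the portion of $(v_i,v_{i+1})$ on either side of the crossing point $p$ were free of other crossings, then $(v_1,v_1')$ itself could simply be replaced by the corresponding chord and we would be done. Hence, in the only case that remains, there exist edges $e_1$ and $e_2$ crossing $(v_i,v_{i+1})$ on \emph{both} sides of the stick. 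Since you delete only $(v_1,v_1')$ and the (at most two) edges crossing $\sigma$, whichever endpoint $v_i$ or $v_{i+1}$ your chord runs to, the undeleted edge $e_1$ or $e_2$ on that side still crosses the chord's final stretch; there is no ``less obstructed side''. So the replacement, as you set it up, never yields a crossing-free chord in the hard case, and the plan cannot be completed without changing which edges are deleted.

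The paper resolves exactly this point differently: it \emph{keeps} the edge $(v_1,v_1')$, deletes instead one of the obstructing edges $e_1$ or $e_2$, inserts the corresponding chord, and \emph{reroutes} the one or two edges crossing the stick so that they terminate at $v_{i+1}$ (respectively $v_i$) rather than being deleted and re-added as sticks. The reason at least one of the two symmetric alternatives must succeed is the standing assumption that the drawing contains no homotopic parallel edges---a consideration entirely absent from your proposal---and there is a genuinely exceptional configuration (the two stick-crossers being sticks emanating from $v_i$ and from $v_{i+1}$, where either rerouting would create two homotopic copies of $(v_i,v_{i+1})$) that needs a third replacement: delete both $e_1$ and $e_2$ and insert both chords $(v_1,v_i)$ and $(v_1,v_{i+1})$. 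Your ``compensating edges reinserted as sticks at $v_1$ or $v_i$'' face the same homotopy obstruction (a reinserted stick can duplicate an existing edge homotopically), so even after correcting the deletion set, your outline is missing the idea---non-homotopy forcing one of two symmetric reroutings to work, plus the exceptional two-chord case---that actually closes the proof.
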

\begin{proof}
Assume for a contradiction that there exists a far stick. Let
w.l.o.g.~$(v_1,v_1')$ be the edge containing this stick and assume
that $(v_1,v_1')$ emanates from vertex $v_1$ and leads to vertex
$v_1'$ by crossing the edge $(v_i,v_{i+1})$ of $\fs$, where $i \neq
2$ and $i+1 \neq s$. If we can replace $(v_1,v'_1)$ either with chord
$(v_1,v_i)$ or with chord $(v_1,v_{i+1})$, then the maximal planar
substructure of the derived graph would have more edges than $G_p$;
contradicting the maximality of $G_p$. Thus, there exist two edges,
say $e_1$ and $e_2$, that cross $(v_i,v_{i+1})$ to the left and to
the right of $(v_1,v'_1)$, respectively; see
Figure~\ref{fig:stick_no_far_1}. By Lemma~\ref{lem:stick_cross}, edge
$(v_1,v'_1)$ is crossed by at least one other edge, say $e$, inside
$\mathcal{F}_s$. Note that by $3$-planarity edge $(v_1,v_1')$ might
also be crossed by a second edge, say $e'$, inside $\fs$.
Suppose first, that $(v_1,v_1')$ has a single crossing inside
$\mathcal{F}_s$. To cope with this case, we propose two
alternatives:%
\begin{inparaenum}[(a)]
\item replace $e_1$ with chord $(v_1,v_{i+1})$ and make vertex
$v_{i+1}$ an endpoint of $e$, or
\item  replace $e_2$ with chord $(v_1,v_i)$ and make vertex $v_i$ an
endpoint of both $e$; see Figures~\ref{fig:stick_no_far_2}
and~\ref{fig:stick_no_far_3}, respectively.
\end{inparaenum}
Since $e$ and $(v_i,v_{i+1})$ are not homotopic, it follows that at
least one of the two alternatives can be applied, contradicting the
maximality of $G_p$.

\begin{figure}[t]
    \centering
    \begin{minipage}[b]{.19\textwidth}
        \centering
        \subfloat[\label{fig:stick_no_far_1}{}]
        {\includegraphics[width=\textwidth,page=1]{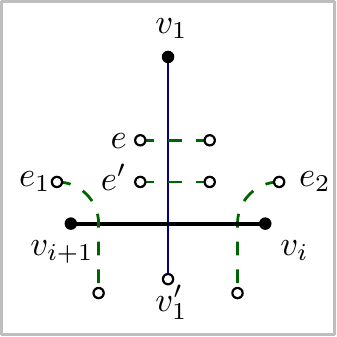}}
    \end{minipage}
    \begin{minipage}[b]{.19\textwidth}
        \centering
        \subfloat[\label{fig:stick_no_far_2}{}]
        {\includegraphics[width=\textwidth,page=2]{images/stick_no_far}}
    \end{minipage} 
	\begin{minipage}[b]{.19\textwidth}
        \centering
        \subfloat[\label{fig:stick_no_far_3}{}]
        {\includegraphics[width=\textwidth,page=3]{images/stick_no_far}}
    \end{minipage}
	\begin{minipage}[b]{.19\textwidth}
        \centering
        \subfloat[\label{fig:stick_no_far_4}{}]
        {\includegraphics[width=\textwidth,page=4]{images/stick_no_far}}
    \end{minipage}	
    	
	\begin{minipage}[b]{.19\textwidth}
        \centering
        \subfloat[\label{fig:stick_no_far_5}{}]
        {\includegraphics[width=\textwidth,page=5]{images/stick_no_far}}
    \end{minipage}
	\begin{minipage}[b]{.19\textwidth}
        \centering
        \subfloat[\label{fig:stick_no_far_6}{}]
        {\includegraphics[width=\textwidth,page=6]{images/stick_no_far}}
    \end{minipage}
	\begin{minipage}[b]{.19\textwidth}
        \centering
        \subfloat[\label{fig:stick_no_far_7}{}]
        {\includegraphics[width=\textwidth,page=7]{images/stick_no_far}}
    \end{minipage}
    \caption{%
    Different configurations used in the proof of Lemma~\ref{lem:stick_no_far}.}
    \label{fig:stick_no_far}
\end{figure}

Consider now the case where $(v_1,v_1')$ has two crossings inside
$\mathcal{F}_s$, with edges $e$ and $e'$. Similarly to the previous
case, we propose two alternatives:%
\begin{inparaenum}[(a)]
\item replace $e_1$ with chord $(v_1,v_{i+1})$ and make vertex
$v_{i+1}$ an endpoint of both $e$ and $e'$, or
\item  replace $e_2$ with chord $(v_1,v_i)$ and make vertex $v_i$ an
endpoint of both $e$ and $e'$; see Figures~\ref{fig:stick_no_far_4}
and~\ref{fig:stick_no_far_5}, respectively.
\end{inparaenum}
Note that in both alternatives the maximal planar substructure of the
derived graph has more edges than $G_p$, contradicting the maximality
of $G_p$. Since $e$ and $e'$ are not homotopic, it follows that one
of the two alternatives is always applicable, as long as, $e$ and
$e'$ are not simultaneously sticks from $v_i$ and $v_{i+1}$,
respectively; see Figure~\ref{fig:stick_no_far_6}. In this scenario,
both alternatives would lead to a situation, where $(v_i,v_{i+1})$
has two homotopic copies. To cope with this case, we observe that
$e$, $e'$ and $(v_1,v'_1)$ are three mutually crossing edges inside
$\fs$. We proceed by removing from $G$ edges $e_1$ and $e_2$, which
we replace by $(v_1,v_i)$ and $(v_1,v_{i+1})$; see
Figure~\ref{fig:stick_no_far_7}. In the derived graph the maximal
planar substructure contains more edges than $G_p$ (in particular,
edges $(v_1,v_i)$ and $(v_1,v_{i+1})$), contradicting its
maximality.\qed
\end{proof}

\begin{lemma}
The planar substructure $G_p$ of a crossing-minimal optimal
$3$-planar graph $G$ is connected.
\label{lem:connected}
\end{lemma}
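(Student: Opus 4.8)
The plan is to argue by contradiction; assume that $G_p$ is disconnected. Since $G_p=(V,E_p)$ contains every vertex of $G$, the interior of each face of $G_p$ contains no vertex, and hence only sticks and middle parts of edges of $G-G_p$. By the definitions above, each such portion is a simple arc whose two endpoints lie on the boundary of the face that contains it; I will call such an arc a \emph{chord} of the face. As $G_p$ is disconnected, at least one face $f$ of $G_p$ has a disconnected boundary. I would fix two components $C_1,C_2$ of $G_p$ that both appear on $\partial f$, set $B_1=\partial f\cap C_1$ and $B_2=\partial f\cap C_2$, and choose them to be an \emph{innermost} such pair, so that no further component of $G_p$ is nested between $B_1$ and $B_2$ inside $f$.

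The core of the argument is to join $B_1$ and $B_2$ by a curve $c$ drawn inside $f$ that crosses no edge of $G$. I would let $c$ run from a vertex $u\in B_1$ to a vertex $w\in B_2$, and choose $u$, $w$ and $c$ so as to minimize the number of crossings of $c$ with $G$; observe that $c$ can only meet chords of $f$, since its endpoints are the only points it shares with $\partial f$. Granting such a crossing-free $c$, the contradiction follows quickly. If $(u,w)\notin E$, then drawing the edge $(u,w)$ along $c$ produces a non-simple $3$-planar drawing with $|E|+1$ edges (no crossing count grows, and since $u\neq w$ are not already joined, no homotopic copy appears), contradicting the optimality of $G$. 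If $(u,w)\in E$, then $u$ and $w$ lie in different components of $G_p$, so $(u,w)\in E-E_p$; rerouting $(u,w)$ along $c$ makes it crossing-free, so the planar substructure of the new drawing contains $G_p\cup\{(u,w)\}$ and is strictly larger, contradicting the maximality of $G_p$ (and, as the number of crossings strictly drops, its crossing-minimality as well).

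It thus remains to construct the crossing-free curve, which I expect to be the main obstacle. Taking a crossing-minimal $c$, I may assume it meets each chord at most once. The local tool is rerouting: if $c$ crosses a chord $\sigma$, then following $c$ alongside $\sigma$ to an endpoint of $\sigma$ on $\partial f$ removes that crossing and strictly lowers the crossing count, unless the detour is itself blocked by another chord. Iterating this improvement, the only way $c$ can fail to be made crossing-free is that the chords of $f$ form a connected \emph{wall} that topologically separates $B_1$ from $B_2$ inside $f$. Since a single chord (a simple arc with both endpoints on $\partial f$) never separates two boundary components, such a wall must contain a closed curve that encircles $B_2$, assembled from several mutually crossing chords.

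The hardest part will be excluding this wall, and here I would invoke the structural lemmas already proved: the chords inside a face are heavily constrained, namely at most three sticks in a triangular face (Lemma~\ref{lem:no_of_sticks}) and only short sticks and short middle parts otherwise (Lemmas~\ref{lem:stick_no_far} and~\ref{lem:pt_no_far}). A cycle of mutually crossing chords encircling the innermost component $C_2$ would force one of the boundary edges of $f$ to be crossed more than three times, or would create two homotopic copies of a boundary edge, contradicting $3$-planarity or the absence of homotopic edges; the innermost choice of $(C_1,C_2)$ guarantees that no intervening component helps the chords close up. With the wall excluded, $B_1$ and $B_2$ bound a common region of $f$ minus the chords, the desired curve $c$ exists, and the contradiction established in the second paragraph completes the proof.
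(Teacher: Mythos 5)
Your overall strategy (find a crossing-free curve between two boundary components of a face of $G_p$, then add or reroute an edge along it) is workable only if the ``wall'' case can really be excluded, and that is exactly where your argument breaks. Your topological claim --- that a separating wall must contain a closed curve assembled from \emph{mutually crossing} chords, and that such a cycle forces an edge of $f$ to be crossed more than three times or creates homotopic copies --- is false. The separating curve may use pieces of the boundary edges of $f$ between crossing points, dodging each vertex of the boundary via a chord that cuts the corner at that vertex. Concretely: let $B_1$ be a quadrilateral $v_1v_2v_3v_4$ and $B_2$ a component drawn inside it, and let four passing-through edges have short middle parts crossing $(v_4,v_1),(v_1,v_2)$, then $(v_1,v_2),(v_2,v_3)$, then $(v_2,v_3),(v_3,v_4)$, then $(v_3,v_4),(v_4,v_1)$ respectively. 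These chords are pairwise non-crossing, every edge of $B_1$ is crossed exactly twice, all middle parts are short (so Lemmas~\ref{lem:no_of_sticks}, \ref{lem:stick_cross}, \ref{lem:pt_no_far} and~\ref{lem:stick_no_far} are satisfied), no homotopic edges arise, yet the chord pieces together with the edge segments between crossings form a noose that separates every vertex $v_i$ from every vertex of $B_2$. So no crossing-free curve $c$ exists, your claimed dichotomy does not apply, and the proof cannot be completed by the tools you invoke. (Giving each of the four passing-through edges a third crossing outside $f$ even blocks the weaker idea of adding an edge with one crossing.)

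The ingredient you are missing is the one the paper's proof is built on: \emph{connectivity of $G$ itself}, which follows from optimality. Since $G$ is connected, some edge of $G-G_p$ has its endpoints in two different components of $G_p$; inside the face whose boundary meets both components, this edge contributes a stick or a middle part that joins the two boundary components. Such a segment can never be short --- there is no walk along the face boundary between points of different components, and edges of different components are never consecutive --- so it directly contradicts Lemmas~\ref{lem:pt_no_far} and~\ref{lem:stick_no_far}. In the scenario above this is what kills the configuration: whatever edge of $G$ attaches $B_2$ to the rest of the graph violates shortness, not $3$-planarity or non-homotopy. Note that this also shows why your wall cannot occur in a crossing-minimal optimal $G$, but the reason is the bridging edge forced by connectivity, not any property of the wall itself; repairing your proof along these lines collapses it into the paper's three-line argument, which needs neither the innermost-pair selection, nor the crossing-minimal curve, nor any rerouting.
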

\begin{proof}
Assume to the contrary that the maximum planar substructure $G_p$ of
$G$ is not connected and let $G_p'$ be a connected component of
$G_p$. Since $G$ is connected, there is an edge of $G-G_p$ that
bridges $G_p'$ with $G_p-G_p'$. By definition, this edge is either a
stick or a passing through edge for the common face of $G_p'$ and
$G-G_p'$. In both cases, it has to be short (by
Lemmas~\ref{lem:pt_no_far} and~\ref{lem:stick_no_far}); a
contradiction.\qed
\end{proof}

\noindent  In the next two lemmas, we consider the case where a
non-triangular face $\fs=\{v_1, v_2, \ldots, v_s\}$, $s \geq 4$ of
$G_p$ has no sticks. Let $br(\fs)$ and $\overline{br}(\fs)$ be the
set of bridges and non-bridges of $\fs$, respectively (in 
Figure~\ref{fig:non_simple}, edge $(v_4,v_5)$ is a bridge). In the absence
of sticks, a passing through edge of $\fs$ \emph{originates} from one
of its end-vertices, crosses an edge of $\overline{br}(\fs)$ to
\emph{enter} $\fs$, passes through $\fs$ (possibly by defining two
middle parts, if it crosses an edge of $br(\fs)$), crosses another
edge of $\overline{br}(\fs)$ to \emph{exit} $\fs$ and
\emph{terminates} to its other end-vertex. We \emph{associate} the
edge of $\overline{br}(\fs)$ that is used by the passing through edge
to enter (exit) $\fs$ with the origin (terminal) of this passing
through edge. Let  $\overline{s_b}$ and $s_b$ be the number of edges
in $\overline{br}(\fs)$ and  $br(\fs)$, respectively.
Let also $\widehat{s_b}$ be the number of edges of
$\overline{br}(\fs)$ that are crossed by no passing through edge of
$\fs$. Clearly, $\widehat{s_b} \leq \overline{s_b}$ and
$s=\overline{s_b} + 2s_b$.

\begin{lemma}
Let $\fs=\{v_1, v_2, \ldots, v_s\}$, $s \geq 4$ be a non-triangular
face of $G_p$ that has no sticks. Then, the number $\widehat{s_b}$ of
non-bridges of $\fs$ that are crossed by no passing through edge of
$\fs$ is strictly less than half the number $\overline{s_b}$ of of
non-bridges of $\fs$, that is, $\widehat{s_b} <
\frac{\overline{s_b}}{2}$.
\label{lem:uncrossed_edges}
\end{lemma}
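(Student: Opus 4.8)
The plan is to establish the equivalent statement that, among the non-bridges of $\fs$, those crossed by some passing-through edge strictly outnumber those that are not: since the crossed non-bridges number $\overline{s_b}-\widehat{s_b}$, the inequality $\widehat{s_b}<\overline{s_b}/2$ is exactly $\overline{s_b}-\widehat{s_b}>\widehat{s_b}$. I would first dispose of the degenerate situation in which no passing-through edge crosses $\fs$ at all: having also no sticks, $\fs$ would then have empty interior, so it could be triangulated by interior chords, enlarging the planar substructure and contradicting the maximality of $G_p$. Hence at least one non-bridge is crossed, and it suffices to compare the two counts.

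The first tool is the maximality of $G_p$ via corner-cutting. I claim that two non-bridges consecutive along the boundary walk of $\fs$ cannot both be uncrossed. Indeed, if $(v_{i-1},v_i)$ and $(v_i,v_{i+1})$ are both uncrossed non-bridges, then a small neighbourhood of the corner $v_i$ inside $\fs$ is free of edge segments, since any segment entering it would have to cross one of these two edges; thus the chord $(v_{i-1},v_{i+1})$ can be drawn crossing-free, routed so as to enclose any uncrossed pendant structure hanging off $v_i$. Adding this chord yields either a larger planar substructure (if it is already an edge of $G$, rerouted into $\fs$), contradicting the maximality of $G_p$, or a denser $3$-planar graph, contradicting the optimality of $G$. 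Consequently uncrossed non-bridges never occur in consecutive pairs along the boundary.

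The second tool is Lemma~\ref{lem:pt_no_far}: every middle part of $\fs$ is short. When a passing-through edge enters or exits $\fs$ across a non-bridge, that crossing is an endpoint of a short middle part, whose other endpoint lies on the boundary edge consecutive to it in the walk; hence every crossed non-bridge has a crossed neighbour along the boundary. In the bridge-free case this already finishes the proof: the crossed non-bridges then appear in blocks of length at least two, the first tool confines the uncrossed ones to blocks of length one, and since these blocks alternate cyclically the crossed non-bridges outnumber the uncrossed ones by a factor of at least two, comfortably yielding $\widehat{s_b}<\overline{s_b}/2$.

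The main obstacle is the presence of bridges, which interleave arbitrarily in the boundary walk. When a passing-through edge also crosses a bridge it splits into two short middle parts, so the crossed neighbour guaranteed by the second tool may be that bridge rather than another non-bridge; crossed non-bridges therefore need not pair up with one another, and uncrossed non-bridges may be separated by uncrossed bridges. I would resolve this by an explicit charging that assigns each uncrossed non-bridge to an adjacent crossed boundary edge, and then account for the bridges through the identity $s=\overline{s_b}+2s_b$ together with the fact that a bridge-crossing passing-through edge spends all three of its permitted crossings on edges of $G_p$ and hence cannot mediate further conflicts. Checking that this charging never overloads a crossed non-bridge, and that at least one crossed non-bridge remains uncharged so as to keep the inequality strict, is the delicate part of the argument.
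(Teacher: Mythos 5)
Your bridge-free argument is sound and is a genuinely different, more local route than the paper's: the corner-cutting claim (no two consecutive uncrossed non-bridges) and the pairing claim (every crossed edge has a crossed neighbour, via Lemma~\ref{lem:pt_no_far}) do combine to give $\widehat{s_b}\leq \overline{s_b}/3$ when $s_b=0$. But the proposal as written has a real gap: the case $s_b>0$ is never proved. You describe a charging scheme only in outline and explicitly defer ``the delicate part'' --- checking that no crossed non-bridge is overloaded and that one remains uncharged for strictness. That deferred part is exactly where the difficulty lives, and it is not a fringe case: the lemma introduces the notation $s_b$, $\overline{s_b}$, $\widehat{s_b}$ precisely because non-simple faces with bridges are the setting of this paper. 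Concretely, both of your tools degrade there. A passing-through edge that crosses a bridge enters through one non-bridge, crosses the bridge, and exits through a non-bridge adjacent to the \emph{other} occurrence of that bridge in the walk; the two crossed non-bridges it produces need not be adjacent to each other, so crossed non-bridges no longer form blocks of length at least two. Symmetrically, two uncrossed non-bridges can be separated in the walk by a bridge excursion (the walk entering and leaving a subtree, which may itself contain non-bridges), in which case they are not consecutive and your corner-cutting chord would have to be routed around whatever hangs off the bridge --- something you have not justified. Whether the alternating-block count survives these interleavings is exactly what remains to be verified, so the proof is incomplete.

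For contrast, the paper sidesteps the bridge structure entirely with a global counting-and-replacement argument. Assume $\widehat{s_b}\geq \overline{s_b}/2$; then at most $\overline{s_b}/2$ non-bridges are crossed, each at most three times, and every passing-through edge consumes two of these crossings, so $|pt(\fs)|\leq \lfloor 3\overline{s_b}/4\rfloor$. Deleting \emph{all} passing-through edges and inserting the $2s-6$ edges $\{(v_1,v_i):2<i<s\}\cup\{(v_i,v_{i+2}):2\leq i\leq s-2\}$ inside $\fs$ then yields either a larger graph or an equal-sized graph with a larger planar substructure whenever $s>4$; here bridges only help, since $s=\overline{s_b}+2s_b\geq\overline{s_b}$ makes the number of inserted edges grow while the deletion bound does not. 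The remaining case $s=4$ is closed by Lemma~\ref{lem:pt_no_far}: the at most two crossed edges must be consecutive, so a chord such as $(v_1,v_3)$ can be added crossing-free, contradicting optimality. If you wish to keep your local approach you must actually construct and verify the charging through bridge excursions; otherwise the global count is the cleaner repair.
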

\begin{proof}
For a proof by contradiction assume that $\widehat{s_b} \geq
\frac{\overline{s_b}}{2}$. Since at most $\frac{\overline{s_b}}{2}$
edges of $\fs$ can be crossed (each of which at most three times) and
each passing through edge of $\fs$ crosses two edges of
$\overline{br}(\fs)$, it follows that $|pt(\fs)| \leq  \lfloor
\frac{3\overline{s_b}}{4} \rfloor$, where $pt(\fs)$ denotes the set
of passing through edges of $\fs$. To obtain a contradiction, we
remove from $G$ all edges that pass through $\fs$ and we introduce
$2s-6$ edges $\{(v_1,v_i):~2<i<s\} \cup \{(v_i,v_i+2):~2 \leq i \leq
s-2\}$ that lie completely in the interior of $\fs$. This simple
operation will lead to a larger graph (and therefore to a
contradiction to the optimality of $G$) or to a graph of the same
size but with larger planar substructure (and therefore to a
contradiction to the maximality of $G_p$) as long as $s > 4$. For $s
= 4$, we need a different argument. By Lemma~\ref{lem:pt_no_far}, we
may assume that all three passing through edges of $\fs$ cross two
consecutive edges of $\fs$, say w.l.o.g.~$(v_1,v_2)$ and $(v_2,v_3)$.
This implies that chord $(v_1,v_3)$ can be safely added to $G$; a
contradiction to the optimality of $G$.\qed
\end{proof}

\newcommand{\stickexist}{Let $\fs=\{v_1, v_2, \ldots, v_s\}$, $s
\geq 4$ be a non-triangular face of $G_p$. Then, $\fs$ has at least
one stick.}

\begin{lemma}
\stickexist
\label{lem:stick_exist}
\end{lemma}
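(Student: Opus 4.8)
The plan is to argue by contradiction. Assume that the non-triangular face $\fs=\{v_1,\ldots,v_s\}$, $s\ge 4$, has no stick. Then every edge of $G-G_p$ that meets $\fs$ is a passing-through edge, and by Lemma~\ref{lem:pt_no_far} each such edge crosses two \emph{consecutive} edges of $\fs$; hence it enters and exits $\fs$ through two non-bridges and, in doing so, separates exactly one boundary vertex (a single corner) from the interior of $\fs$. Applying Lemma~\ref{lem:uncrossed_edges} to this stickless face gives $\widehat{s_b}<\overline{s_b}/2$, so that strictly more than half of the non-bridges of $\fs$ are actually crossed by passing-through edges. This is precisely the regime left open by Lemma~\ref{lem:uncrossed_edges}, and the goal is to convert this abundance of crossings into a contradiction.

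First I would set up the crossing count. Writing $a_j$ for the number of passing-through edges that cut off the corner $v_j$ (that is, cross both $(v_{j-1},v_j)$ and $(v_j,v_{j+1})$), we have $|pt(\fs)|=\sum_j a_j$, while $3$-planarity caps the crossings on each non-bridge, giving $a_j+a_{j+1}\le 3$ for every non-bridge shared by the corners $v_j$ and $v_{j+1}$. The next step is the global replacement already used for Lemma~\ref{lem:uncrossed_edges}: delete from $G$ all passing-through edges of $\fs$ and insert the $2s-6$ crossing-free chords that triangulate the interior of $\fs$. Since these chords interfere neither with $G_p$ nor with each other, showing that the number $2s-6$ of inserted edges is at least $|pt(\fs)|$ yields either a larger graph (contradicting the optimality of $G$) or an equally large graph with a strictly larger planar substructure (contradicting the maximality of $G_p$).

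The crux---and the step I expect to be the main obstacle---is the bound $|pt(\fs)|\le 2s-6$. The generic estimate $|pt(\fs)|\le \lfloor 3\overline{s_b}/2\rfloor$ coming from $a_j+a_{j+1}\le 3$ is too weak for small $s$, so the argument must exploit that the passing-through edges cutting a common corner are nested and that, by the non-homotopy convention together with the crossing-minimality of $G$, such nesting cannot occur too deeply without enabling a local improvement (re-routing an innermost passing-through edge, or replacing it by the chord across its cut-off corner, so as to reduce crossings or enlarge the planar substructure). I would combine this with $\widehat{s_b}<\overline{s_b}/2$ from Lemma~\ref{lem:uncrossed_edges} to control the distribution of the $a_j$, and finally dispatch the smallest faces ($s=4,5$) by the same explicit chord insertions as in the proof of Lemma~\ref{lem:uncrossed_edges}. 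Extra care is needed because $\fs$ may be non-simple: bridges are traversed twice along the boundary and the regions they bound contain vertices, so both the triangulating chords and the corner-cutting replacements must be phrased relative to $\overline{br}(\fs)$ rather than to a simple polygon.
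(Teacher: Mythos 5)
There is a genuine gap at exactly the step you flag as the crux: the bound $|pt(\fs)|\le 2s-6$ that your replacement scheme requires is simply not available, and not only for $s=4,5$. Since you delete all passing-through edges and insert only the $2s-6$ chords, you need $|pt(\fs)|\le 2s-6$; but the only upper bound $3$-planarity gives is $|pt(\fs)|\le\lfloor\tfrac{3}{2}(\overline{s_b}-\widehat{s_b})\rfloor\le\lfloor\tfrac{3}{2}s\rfloor$, and $\lfloor\tfrac{3}{2}s\rfloor\le 2s-6$ only when $s\ge 12$. In the extreme case $s=4$, $s_b=\widehat{s_b}=0$, the face can be traversed by \emph{six} passing-through edges (each boundary edge crossed three times), while your scheme inserts only $2$ edges; the deficit is $4$, and no appeal to Lemma~\ref{lem:uncrossed_edges} helps, because that lemma gives a \emph{lower} bound on the number of crossed non-bridges, which makes $|pt(\fs)|$ larger, not smaller. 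Your proposed patches do not close this: the ``nesting plus crossing-minimality'' argument is only a sketch of an idea (an innermost edge cutting off a corner cannot in general be replaced by the chord across that corner, since that chord may be crossed by the other nested passing-through edges and by edges outside $\fs$), and the $s=4,5$ cases cannot be ``dispatched as in Lemma~\ref{lem:uncrossed_edges}'', because that lemma's quadrilateral argument lives in the regime $|pt(\fs)|\le 3$, not $|pt(\fs)|=6$. (A minor further slip: the $2s-6$ inserted edges do cross each other --- only $s-3$ of them are crossing-free --- though this does not affect the structure of your count.)

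The idea you are missing, which is how the paper makes the count work, is to \emph{recycle} the deleted passing-through edges rather than discard them: besides the $s-3$ fan chords $(v_1,v_i)$, for every crossed non-bridge $e_i=(v_i,v_{i+1})$ one reinserts an edge $(v_1,u_i)$, where $u_i$ is a vertex associated with $e_i$, drawn from $v_1$ inside the wedge between $(v_1,v_i)$ and $(v_1,v_{i+1})$ to the old crossing point on $e_i$ and then along the outer part of the deleted edge (plus two analogous edges from $v_3$ and $v_{s-1}$ for the two edges incident to $v_1$). This yields $s-3+(\overline{s_b}-\widehat{s_b})$ new edges against at most $\lfloor\tfrac{3}{2}(\overline{s_b}-\widehat{s_b})\rfloor$ deleted ones, and the deficit $s-3-\tfrac{1}{2}(\overline{s_b}-\widehat{s_b})=\tfrac{1}{2}(s+\widehat{s_b}+2s_b)-3$ is nonnegative whenever $s+\widehat{s_b}+2s_b\ge 5$ (using the floor for the boundary cases). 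Only the single configuration $s=4$ with all four edges crossed three times survives, and that case genuinely needs the separate combinatorial analysis the paper gives (the sets $R_i$ of associated vertices, the observation that all edges from $R_i$ going to $R_{i+1}$ would allow the chord $(v_i,v_{i+2})$, and the two explicit reconfigurations). Without the recycling step and without a real argument for this residual case, your proof does not go through.
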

\begin{proof}[Sketch]
For a proof by contradiction, assume that $\fs$ has no sticks. By
Lemma~\ref{lem:uncrossed_edges}, it follows that there exist at
least two incident edges of $\overline{br}(\fs)$ that are crossed by
passing through edges of $\fs$, say w.l.o.g.~$(v_s,v_1)$ and
$(v_1,v_2)$. Note that these two edges are not bridges of $\fs$.
If $s+\widehat{s_b} + 2s_b \geq 6$, then as in the proof of
Lemma~\ref{lem:uncrossed_edges}, it is possible to construct a graph
that is larger than $G$ or of equal size as $G$ but with larger planar
substructure. The same holds when $s+\widehat{s_b} +2s_b = 5$ (that
is, $s=5$ and $\widehat{s_b} = s_b=0$ or $s=4$, $\widehat{s_b}=1$
and $s_b=0$). Both cases, contradict either the
optimality of $G$ or the maximality of $G_p$. The case where
$s+\widehat{s_b}+2s_b=4$ is slightly more involved; refer to 
\appb{\ver}.\qed
\end{proof}

\begin{figure}[t]
    \centering
    \begin{minipage}[b]{.19\textwidth}
        \centering
        \subfloat[\label{fig:stick_no_three_1}{}]
        {\includegraphics[width=\textwidth,page=1]{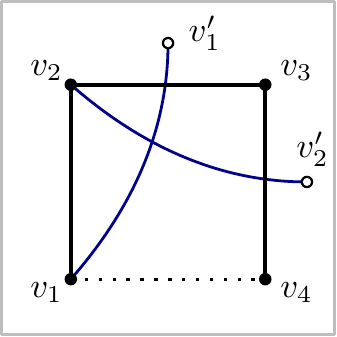}}
    \end{minipage}
		\begin{minipage}[b]{.19\textwidth}
        \centering
        \subfloat[\label{fig:stick_no_three_2}{}]
        {\includegraphics[width=\textwidth,page=2]{images/stick_no_three}}
    \end{minipage}
		\begin{minipage}[b]{.19\textwidth}
        \centering
        \subfloat[\label{fig:stick_no_three_3}{}]
        {\includegraphics[width=\textwidth,page=3]{images/stick_no_three}}
    \end{minipage}
		\begin{minipage}[b]{.19\textwidth}
        \centering
        \subfloat[\label{fig:stick_no_three_4}{}]
        {\includegraphics[width=\textwidth,page=4]{images/stick_no_three}}
    \end{minipage}
		\begin{minipage}[b]{.19\textwidth}
        \centering
        \subfloat[\label{fig:stick_no_three_5}{}]
        {\includegraphics[width=\textwidth,page=5]{images/stick_no_three}}
    \end{minipage}
    \caption{%
    Different configurations used in Lemma~\ref{lem:stick_no_three}.}
    \label{fig:stick_no_three}
\end{figure}

\noindent By Lemma~\ref{lem:stick_no_far}, all sticks of $\fs$ are
short. A stick $(v_i,v'_i)$ of $\fs$ is called \emph{right}, if it
crosses edge $(v_{i+1},v_{i+2})$ of $\fs$. Otherwise, stick
$(v_i,v'_i)$ is called \emph{left}. Two sticks are called
\emph{opposite}, if one is left and the other one is right.

\begin{lemma}
Let $\fs=\{v_1, v_2, \ldots, v_s\}$, $s \geq 4$ be a non-triangular
face of $G_p$. Then, $\fs$ has not three mutually crossing
sticks.
\label{lem:stick_no_three}
\end{lemma}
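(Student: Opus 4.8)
The plan is to assume, for contradiction, that $\fs$ contains three sticks $\sigma_1,\sigma_2,\sigma_3$ that pairwise cross, and then to exhibit a short chord of $\fs$ that is compatible with $G_p$. First I would record the consequences of $3$-planarity. By Lemma~\ref{lem:stick_no_far} every stick of $\fs$ is short, and since $\sigma_1,\sigma_2,\sigma_3$ mutually cross, each of them already carries two crossings with the other two sticks together with the single crossing against the edge of $\fs$ at which it ends. Hence each $\sigma_m$ realises \emph{exactly} three crossings inside $\fs$, so the edge carrying it leaves $\fs$ through that side and reaches its far endpoint in the adjacent face without any further crossing. In particular the three sticks emanate from three \emph{distinct} vertices of $\fs$, since two sticks sharing a vertex would meet only at that vertex and could not cross.

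Next I would enumerate the possible configurations. Because the sticks are short, each reaches only the vertex one position away along the boundary, so the three source vertices must lie inside a window of only a few consecutive vertices of $\fs$. Viewing the relevant part of each stick as a chord of the face disk (its two ends lie on the boundary of $\fs$), and using the \emph{left}/\emph{right} classification together with the elementary fact that three chords of a disk cross pairwise only if their six boundary endpoints interleave, I would reduce to the handful of canonical patterns shown in Figure~\ref{fig:stick_no_three} (up to reflection and relabelling). A representative generic case is two adjacent right sticks, say from $v_1$ and $v_2$, together with an opposite left stick from $v_4$ that crosses the same side $(v_2,v_3)$ as the stick from $v_1$.

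For the generic configurations the key move is to insert the chord of $\fs$ joining the two \emph{extreme} source vertices, drawn in the interior of $\fs$; in the representative case this is the chord $(v_1,v_4)$. Such a chord crosses no edge of $G_p$, as it is an interior chord of the face, and it avoids all three sticks: it shares the endpoint $v_1$ with $\sigma_1$ and the endpoint $v_4$ with $\sigma_3$, while the middle stick $\sigma_2$ (together with the side it crosses) lies entirely on the cut-off side of $(v_1,v_4)$ and therefore cannot meet it. Consequently $(v_1,v_4)$ is crossing-free, so adding it to $G$ yields a $3$-planar graph with one more edge, contradicting the optimality of $G$; equivalently, $G_p$ together with this chord is a larger planar substructure, contradicting the maximality of $G_p$.

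The step I expect to be the main obstacle is the degenerate sub-cases in which this extreme chord cannot simply be added: either $(v_1,v_4)$ already occurs in $G$ (drawn elsewhere), so adding a copy would require a homotopy check, or a source vertex already coincides with an endpoint of the side that its stick crosses, so that the ``new'' chord is in fact one of the $\sigma_m$ redrawn. These are precisely the analogues of the situations in Figures~\ref{fig:stick_no_far_6} and~\ref{fig:stick_no_far_7} arising in the proof of Lemma~\ref{lem:stick_no_far}, and I would resolve them by the same mechanism: instead of a single insertion, I would delete the pair of edges crossing the shared side on either side of the third stick and insert the two chords to its endpoints, using the fact that these two obstructing edges cannot be simultaneously homotopic to that side to guarantee that at least one admissible rerouting survives. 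Verifying that each rerouting keeps the drawing $3$-planar and creates no homotopic copy, including for the non-simple faces, is the routine but delicate part, which I would relegate to \appb{\ver}.
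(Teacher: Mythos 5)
Your opening observations (each of the three sticks is saturated with exactly three crossings, the sticks come from three distinct vertices, reduction via left/right sticks to a small number of patterns) agree with the paper, but your key step is not sound. The paper must handle two configurations: the left stick emanates from $v_3$ and crosses $(v_1,v_2)$ (Figure~\ref{fig:stick_no_three_2}), or it emanates from $v_4$ and crosses $(v_2,v_3)$ (Figure~\ref{fig:stick_no_three_5}). Your ``extreme chord'' argument only works in the second of these, which is the one you chose as representative. In the first configuration the three source vertices are the consecutive vertices $v_1,v_2,v_3$, the extreme chord is $(v_1,v_3)$, and the middle stick runs from $v_2$ to a point on $(v_3,v_4)$; its far endpoint therefore lies \emph{outside} the region that $(v_1,v_3)$ cuts off. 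Viewed as a chord of the disk bounded by $\fs$, the middle stick separates $v_1$ from $v_3$, so by the Jordan curve theorem \emph{every} arc from $v_1$ to $v_3$ inside $\fs$ crosses it. Since the edge containing that stick already has three crossings, no chord $(v_1,v_3)$ can be inserted at all, and your stated justification (``the middle stick together with the side it crosses lies entirely on the cut-off side'') is simply false in this case. No pair of the three source vertices works here: each of the three sticks separates the other two sticks' source vertices, so pure insertion between source vertices is impossible in this configuration.

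There are two further problems even in your representative case. First, when $s=4$ (a quadrilateral face is the basic case, not a degenerate one), any chord $(v_1,v_4)$ drawn inside $\fs$ is homotopic to the face edge $(v_4,v_1)$, because the interior of a face of $G_p$ contains no vertex; in the paper's model such an edge may not be added, and your fallback does not repair this. Second, the inference from ``avoids $\sigma_1,\sigma_2,\sigma_3$'' to ``crossing-free'' is unjustified: other sticks and middle parts of $\fs$ can enter the relevant region through $(v_4,v_5),\ldots,(v_s,v_1)$, and such edges may already carry three crossings, so adding the chord can destroy $3$-planarity (this particular point can be repaired by routing the chord along the two saturated sticks, but that argument is absent). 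The paper sidesteps all of this by a \emph{replacement} rather than an insertion: it first proves that no edge other than the third stick crosses the relevant boundary portion ($(v_1,v_2)$ in the first case; the part of $(v_2,v_3)$ between $v_2$ and the third stick in the second), then deletes the edge containing the third stick and redraws it as the diagonal $(v_1,v_3)$, respectively $(v_2,v_4)$, along that stick's former track. The new chord inherits the deleted stick's crossings with the other two sticks, so the edge count and all crossing bounds are preserved, the diagonal is a genuine non-homotopic chord even for $s=4$, and the planar substructure grows, contradicting the maximality of $G_p$. Without this replacement mechanism your argument does not go through.
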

\begin{proof}
Suppose to the contrary that there exist three mutually crossing
sticks of $\fs$ and let $e_i$, for $i=1,2,3$ be the edges containing
these sticks.  W.l.o.g.~we assume that at least two of them are right
sticks, say $e_1$ and $e_2$. Let $e_1=(v_1,v'_1)$. Then,
$e_2=(v_2,v'_2)$; see Figure~\ref{fig:stick_no_three_1}. Since $e_1$,
$e_2$ and $e_3$ mutually cross, $e_3$ can only contain a left stick.
By Lemma~\ref{lem:stick_no_far} its endpoint on $\fs$ is $v_3$ or
$v_4$. The first case is illustrated in
Figure~\ref{fig:stick_no_three_2}. Observe that $(v_1,v_2)$ of $\fs$
is only crossed by~$e_3$. Indeed, if there was another edge crossing
$(v_1,v_2)$, then it would also cross $e_1$ or $e_2$, both of which
have three crossings. Hence, $e_3$ can be replaced with 
$(v_1,v_3)$; see Figure~\ref{fig:stick_no_three_3}. The maximal
planar substructure of the derived graph would have more edges than
$G_p$, contradicting the maximality of $G_p$. The case where $v_4$ is
the endpoint of $e_3$ on $\fs$ is illustrated in
Figure~\ref{fig:stick_no_three_5}. Suppose that there exists an edge
crossing $(v_2,v_3)$ of $\fs$ to the left of $e_3$. This edge should
also cross $e_2$ or $e_3$, which is not possible since both edges
have three crossings. So, we can replace $e_3$ with chord $(v_2,v_4)$
as in Figure~\ref{fig:stick_no_three_5}, contradicting the maximality
of $G_p$.\qed
\end{proof}

\newcommand{\stickcrossonce}{Let $\fs=\{v_1, v_2, \ldots, v_s\}$, $s
\geq 4$ be a non-triangular face of $G_p$. Then, each stick of $\fs$
is crossed exactly once within $\fs$.}

\begin{lemma}
\stickcrossonce
\label{lem:stick_cross_once}
\end{lemma}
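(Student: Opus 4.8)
The plan is to prove that each stick of $\fs$ is crossed \emph{exactly} once within $\fs$. By Lemma~\ref{lem:stick_cross} we already know that every stick is crossed \emph{at least} once inside $\fs$, so it suffices to rule out the possibility that a stick is crossed twice (it cannot be crossed more than twice within $\fs$, since the stick also crosses an edge of $\fs$ and $3$-planarity caps its total crossings at three). Thus I would argue by contradiction: assume that some stick, say the one contained in the edge $(v_1,v_1')$ emanating from $v_1$, is crossed exactly twice inside $\fs$ by two edges $e$ and $e'$, in addition to crossing an edge $(v_i,v_{i+1})$ of $\fs$. By Lemma~\ref{lem:stick_no_far} this stick is short, so $(v_i,v_{i+1})$ is a consecutive edge; in fact I may take the stick to cross $(v_2,v_3)$ (a right stick) after leaving $v_1$, so that $v_1,v_2,v_3$ bound the relevant corner of $\fs$.

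The core of the argument is to understand what $e$ and $e'$ can be. Since $(v_1,v_1')$ already has three crossings (two inside $\fs$ with $e,e'$ and one with the side of $\fs$), neither $e$ nor $e'$ can cross $(v_1,v_1')$ a second time, and each of $e,e'$ is an edge of $G-G_p$ meeting $(v_1,v_1')$ transversally in the interior of $\fs$. I would first show that $e$ and $e'$ cannot both be sticks, nor can they both be middle parts that behave compatibly, by invoking Lemma~\ref{lem:stick_no_three}: if two of these interior crossing edges were themselves sticks mutually crossing $(v_1,v_1')$ and each other, we would obtain three mutually crossing sticks, which is forbidden. This restricts the configuration of $e$ and $e'$ substantially, and lets me pin down where their endpoints and further crossings lie relative to the corner $\{v_1,v_2,v_3\}$.

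Once the configuration is pinned down, the strategy mirrors the rerouting technique used throughout Section~\ref{sec:density} and especially in Lemma~\ref{lem:stick_no_far}: I would attempt to delete a small set of edges crossing $(v_2,v_3)$ (the side that $(v_1,v_1')$ exits through) and replace them, together with $(v_1,v_1')$, by chords such as $(v_1,v_2)$, $(v_1,v_3)$, or $(v_2,v_3)$-adjacent chords that can be drawn crossing-free inside $\fs$, redirecting the endpoints of $e$ and $e'$ onto the corner vertices so that no homotopic duplicates are created. Because the side $(v_2,v_3)$ can be crossed at most three times, the edges crossing it to the left and to the right of $(v_1,v_1')$ are limited in number, and one of the two symmetric rerouting alternatives (shifting endpoints to $v_2$ versus to $v_3$) will avoid creating a homotopic copy of an edge of $\fs$, exactly as in the single- and double-crossing analyses of Lemma~\ref{lem:stick_no_far}. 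Either rerouting yields a graph whose maximal planar substructure strictly exceeds $G_p$ in edge count, contradicting the maximality of $G_p$.

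The main obstacle I anticipate is the exhaustive case analysis on the nature and placement of $e$ and $e'$: each of them can independently be a stick (left or right) or a middle part, with endpoints on various vertices of $\fs$, and I must verify for every admissible combination that at least one of the two rerouting alternatives produces no homotopic edges and strictly increases the planar substructure. The non-homotopy hypothesis on the drawing is what ultimately breaks the tie in each case—ruling out the degenerate scenario where both alternatives would force two homotopic copies of $(v_2,v_3)$—so the delicate part is confirming that Lemma~\ref{lem:stick_no_three} together with the non-homotopy assumption eliminates precisely that bad scenario, just as the three-mutually-crossing configuration was handled at the end of Lemma~\ref{lem:stick_no_far}. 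I expect the full case distinction to be routine but tedious, and would defer its complete verification to \appb{\ver}.
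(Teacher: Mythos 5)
Your case reduction rests on a misapplication of Lemma~\ref{lem:stick_no_three}, and this is a genuine gap. That lemma forbids three \emph{mutually} crossing sticks; it says nothing about two sticks that both cross $(v_1,v_1')$ but do not cross each other. The only thing it gives you here (and this is exactly the paper's first step) is that $e$ and $e'$ cannot cross \emph{each other}: if they did, each would have two crossings inside $\fs$, hence neither could afford the two further boundary crossings needed to pass through $\fs$, so all three edges would be mutually crossing sticks, contradicting Lemma~\ref{lem:stick_no_three}. Once you know $e$ and $e'$ do not cross each other, the configuration in which both of them are sticks of $\fs$ is entirely alive --- and it is the main case of the paper's proof, not a degenerate scenario: one has to show that such sticks can only emanate from $v_2$, $v_3$ or $v_s$, that neither can be a stick of $v_3$, that they must emanate from the \emph{same} vertex, and then separately kill the $v_s$-case and the $v_2$-case (the latter splitting further into left/right sticks). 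Likewise, your exclusion of the ``both middle parts'' case cannot come from Lemma~\ref{lem:stick_no_three}, which concerns sticks; the paper rules out two simultaneous passing-through edges by combining Lemma~\ref{lem:pt_no_far} with an argument (its Remarks R.1 and R.2) that the faces of $G_p$ adjacent to $\fs$ across the relevant boundary edges are triangles with identifiable third vertices.

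Beyond this, the rerouting you sketch --- two symmetric alternatives redirecting endpoints to $v_2$ or $v_3$, always contradicting the maximality of $G_p$ --- is not adequate for the cases that survive. In the paper's proof the contradictions are of three different kinds (optimality of $G$, maximality of $G_p$, and crossing minimality of $G$), the analysis branches on $s=4$ versus $s>4$ and on which of the up to four additional edges crossing $(v_1,v_s)$ and $(v_2,v_3)$ actually exist, and one subcase requires a genuinely global surgery: identifying an octagon in $G_p$ spanned by $\fs$ and its two triangular neighbor faces and replacing all $11$ edges lying in or passing through it so that $5$ of the new edges become planar. So the content you defer to the appendix as ``routine but tedious'' is precisely where the proof lives, and the structural shortcut you use to shrink the case analysis is false.
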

\begin{proof}[Sketch]
The detailed proof is given in \appb{\ver}. By
Lemma~\ref{lem:stick_cross}, each stick of $\fs$ is crossed at least
once within $\fs$. So, the proof is given by contradiction either to
the optimality of $G$ or to the maximality of $G_p$, assuming the
existence of a stick of $\fs$ that is crossed twice within $\fs$, say
by edges $e_1$ and $e_2$. Note that by $3$-planarity a stick of $\fs$
cannot be further crossed within $\fs$. First, we prove that $e_1$
and $e_2$ do not cross each other. Then, we show that $e_1$ and $e_2$
cannot be simultaneously passing through $\fs$. The desired
contradiction is obtained by considering two main cases:
Either $e_1$ passes through $\fs$ (and therefore, $e_2$ is a stick of
$\fs$) or both $e_1$ and $e_2$ are sticks of $\fs$.\qed
\end{proof}

\begin{lemma}
Let $\fs=\{v_1, v_2, \ldots, v_s\}$, $s \geq 4$ be a non-triangular
face of $G_p$. Then, there are no crossings between sticks and middle
parts of $\mathcal{F}_s$.
\label{lem:stick_cross_pt} 
\end{lemma}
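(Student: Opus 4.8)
The plan is to argue by contradiction, mirroring the replacement technique used in Lemmas~\ref{lem:stick_no_far} and~\ref{lem:stick_no_three}. Suppose some stick $\sigma$ of $\fs$, lying on an edge $\epsilon=(v_1,v_1')$, crosses a middle part $\mu$ of an edge $e$ inside $\fs$. By Lemma~\ref{lem:stick_no_far} the stick $\sigma$ is short, so w.l.o.g.\ $\epsilon$ emanates from $v_1$ and $\sigma$ crosses $(v_2,v_3)$ at a point $c$ (the \emph{right} case; the \emph{left} case is symmetric). By Lemma~\ref{lem:pt_no_far} the middle part $\mu$ is short, hence it crosses two consecutive edges of $\fs$, delimited by points $p$ and $q$. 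By Lemma~\ref{lem:stick_cross_once}, $\sigma$ is crossed exactly once within $\fs$, so its crossing with $\mu$ is its only interior crossing; together with the two boundary crossings delimiting $\mu$, this already accounts for three crossings of $e$, so by $3$-planarity $e$ has no further crossings and both of its end-portions outside $\fs$ are crossing free.

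First I would localise the configuration. Consider the Jordan region $R_\mu$ cut off from $\fs$ by $\mu$ together with the corner it isolates. Since $\sigma$ crosses $\mu$ exactly once and runs from $v_1$ to $c\in(v_2,v_3)$, a parity and containment argument on $\partial R_\mu$ forces $R_\mu$ to isolate one of the corners $v_1$, $v_2$ or $v_3$; equivalently $\mu$ crosses one of the consecutive pairs $\{(v_s,v_1),(v_1,v_2)\}$, $\{(v_1,v_2),(v_2,v_3)\}$ or $\{(v_2,v_3),(v_3,v_4)\}$. Any short middle part crossing a pair of edges farther along $\fs$ is confined to a region disjoint from $\sigma$ and cannot reach it, so these are the only cases to analyse.

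In each of these cases I would derive a contradiction to the maximality of $G_p$ by a local surgery: remove one of the two offending edges ($\epsilon$ or $e$) and insert a chord of $\fs$ — typically $(v_1,v_3)$, or $(v_s,v_2)$ respectively $(v_2,v_4)$ in the boundary cases — routed along the freed corridor. The tight crossing budget established above (the edges $(v_1,v_2)$, $(v_2,v_3)$, $(v_3,v_4)$ of $\fs$ are each crossed at most three times, and $\sigma$, $\mu$ and $e$ use up essentially all of this allowance) guarantees that removing the offending edge clears the isolated corner, so that the new chord can be drawn crossing free and the planar substructure gains an edge. Degenerate situations in which the inserted chord would coincide with an already present edge of the non-simple face $\fs$ are handled as in the earlier lemmas, by contradicting instead the optimality of $G$ or its crossing minimality.

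The main obstacle is this last step: certifying that the replacement chord is genuinely realisable without crossings. This is precisely where the earlier structural results are needed — that sticks are short and crossed exactly once (Lemmas~\ref{lem:stick_no_far} and~\ref{lem:stick_cross_once}), that middle parts are short (Lemma~\ref{lem:pt_no_far}), and that no three sticks mutually cross (Lemma~\ref{lem:stick_no_three}) — so that no further stick or middle part can obstruct the corridor vacated by the removed edge. Verifying this obstruction-freeness across the handful of local cases, together with the non-simple degeneracies, constitutes the bulk of the detailed argument.
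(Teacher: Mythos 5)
Your setup is sound, and your localisation step is in fact more careful than the paper's own: the parity/containment argument correctly shows that the short middle part must isolate one of the corners $v_1$, $v_2$ or $v_3$, and these three cases are genuinely distinct (the paper's text only treats the first two, where $(u,u')$ crosses $\{(v_s,v_1),(v_1,v_2)\}$ or $\{(v_1,v_2),(v_2,v_3)\}$). Your observation that the three crossings of $e$ are all spent on $\mu$, so that its end-portions outside $\fs$ are crossing-free, is also correct and is implicitly used by the paper.

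The genuine gap is the surgery step, which is the actual content of the lemma. Your justification --- that the tight crossing budget ``guarantees that removing the offending edge clears the isolated corner, so that the new chord can be drawn crossing free'' --- is false. Three-planarity bounds the crossings \emph{per edge}, not the number of distinct edges obstructing the corridor: in your first and third cases the chord ($(v_s,v_2)$ resp.\ $(v_2,v_4)$) separates an endpoint of the surviving stick $\sigma$ from the rest of $\fs$ and therefore \emph{must} cross $\sigma$, so ``crossing free'' is impossible from the start; beyond that, further sticks and passing-through edges may cross $(v_s,v_1)$, $(v_1,v_2)$, $(v_2,v_3)$, $(v_3,v_4)$ in exactly the segments your corridor hugs, and each crossing the new chord inflicts on them can push them over their own budget of three. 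Worse, the paper proves that such obstructions \emph{necessarily exist} in the critical configurations: when $\mu$ isolates $v_1$, crossing-minimality of $G$ forces an edge $f_1$ crossing $(v_1,v_2)$ between $\mu$ and $v_2$, and maximality of $G_p$ forces an edge $f_2$ crossing $(v_2,v_3)$ below $\sigma$. This is precisely why the paper cannot perform your one-edge-for-one-chord swap: there it removes \emph{two} edges ($(u,u')$ and $f_2$) and inserts two (the edge $(v_3,u)$ and the chord $(v_1,v_3)$; see Figure~\ref{fig:stick_cross_pt_2}), while in the case where $\mu$ isolates $v_2$ it abandons the chord idea altogether and argues via crossing minimality, rerouting $(u,u')$ to $(v_1,u')$, with a separate homotopy argument plus Lemma~\ref{lem:stick_cross} for the subcase where $(v_1,u')$ already exists in $G$. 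Your proposal defers exactly this work (``verifying obstruction-freeness \ldots constitutes the bulk of the detailed argument'') while asserting a guarantee that does not hold, so the proof as given does not go through; and for your third case (middle part isolating $v_3$), which the paper's text does not cover at all, an argument of the paper's two-edges-out/two-edges-in type would still have to be constructed from scratch.
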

\begin{proof}
Assume to the contrary that there exists a stick, say of edge
$(v_1,v'_1)$ that emanates from vertex $v_1$ of $\fs$ (towards
$v_1'$), which is crossed by a middle part of $(u,u')$ of $\fs$. By
Lemma~\ref{lem:stick_cross_once}, this stick cannot have another
crossing within $\fs$.  By Lemma~\ref{lem:stick_no_far}, we can
assume w.l.o.g.~that $(v_1,v_1')$ is a right stick, i.e.,
$(v_1,v_1')$ crosses $(v_2,v_3)$. By Lemma~\ref{lem:pt_no_far}, edge
$(u,u')$ crosses two consecutive edges of $\mathcal{F}_s$. We
distinguish two cases based on whether $(v_1,v_1')$ crosses
$(v_s,v_1)$ and $(v_1,v_2)$ of $\fs$ or $(v_1,v_1')$ crosses
$(v_1,v_2)$ and $(v_2,v_3)$ of $\fs$; see
Figures~\ref{fig:stick_cross_pt_1} and \ref{fig:stick_cross_pt_3}
respectively.

\begin{figure}[t]
	\centering
    \begin{minipage}[b]{.19\textwidth}
        \centering
        \subfloat[\label{fig:stick_cross_pt_1}{}]
        {\includegraphics[width=\textwidth,page=1]{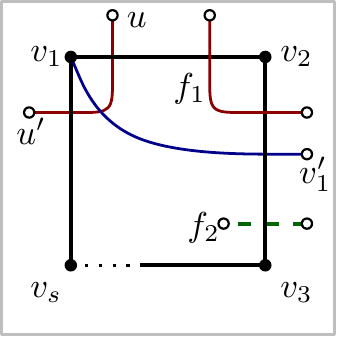}}
    \end{minipage} 
    \begin{minipage}[b]{.19\textwidth}
        \centering
        \subfloat[\label{fig:stick_cross_pt_2}{}]
        {\includegraphics[width=\textwidth,page=2]{images/stick_cross_pt}}
    \end{minipage}
	\begin{minipage}[b]{.19\textwidth}
        \centering
        \subfloat[\label{fig:stick_cross_pt_3}{}]
        {\includegraphics[width=\textwidth,page=3]{images/stick_cross_pt}}
    \end{minipage}
	\begin{minipage}[b]{.19\textwidth}
        \centering
        \subfloat[\label{fig:stick_cross_pt_4}{}]
        {\includegraphics[width=\textwidth,page=4]{images/stick_cross_pt}}
    \end{minipage}
	\begin{minipage}[b]{.19\textwidth}
        \centering
        \subfloat[\label{fig:stick_cross_pt_5}{}]
        {\includegraphics[width=\textwidth,page=5]{images/stick_cross_pt}}
    \end{minipage}
    \caption{%
    Different configurations used in Lemma~\ref{lem:stick_cross_pt}.}
    \label{fig:stick_cross_pt}
\end{figure}

In the first case, we can assume w.l.o.g.~that $u$ is the vertex
associated with $(v_1,v_2)$, while $u'$ is the one associated with
$(v_s,v_1)$. Hence, there exists an edge, say $f_1$, that crosses
$(v_1,v_2)$ to the right of $(u,u')$, as otherwise we could replace
$(u,u')$ with stick $(v_2,u')$ and reduce the total number of
crossings by one, contradicting the crossing minimality of $G$. Edge
$f_1$ passes through $\mathcal{F}_s$ and also crosses edge
$(v_2,v_3)$ above $(v_1,v_1')$. Similarly, there exists an edge
$f_2$ that crosses $(v_2,v_3)$ below $(v_1,v_1')$, as otherwise
replacing $(v_1,v_1')$ with chord $(v_1,v_3)$ would contradict the
maximality of $G_p$. We proceed by removing edges $(u,u')$ and $f_2$
from $G$ and by replacing them with $(v_3,u)$ and chord $(v_1,v_3)$;
see Figure~\ref{fig:stick_cross_pt_2}. The maximal planar
substructure of the derived graph is larger than $G_p$; a
contradiction.

In the second case, we assume that $u$ is associated with
$(v_1,v_2)$ and $u'$ with $(v_2,v_3)$; see
Figure~\ref{fig:stick_cross_pt_3}. In this scenario, there exists
an edge, say $f$, that crosses $(v_2,v_3)$ below $(v_1,v_1')$, as
otherwise we could replace $(v_1,v_1')$ with chord $(v_1,v_3)$,
contradicting the maximality of $G_p$. If $(v_1,u')$ does not belong
to $G$, then we remove $(u,u')$ from $G$ and replace
it with stick $(v_1,u')$; see Figure~\ref{fig:stick_cross_pt_4}. In
this way, the derived graph has fewer crossings than $G$; a
contradiction. Note that $(v_1,v_1')$ and $(v_1,u')$ cannot be
homotopic (if $v_1' = u'$), as otherwise edge
$(v_1,v_1')$ and $(u,u')$ would not cross in the initial
configuration. Hence, edge $(v_1,u')$ already exists in $G$.
In this case, $f$ is identified with $(v_1,u')$; see
Figure~\ref{fig:stick_cross_pt_5}. But, in this case $f$ is an
uncrossed stick of $\mathcal{F}_s$, contradicting
Lemma~\ref{lem:stick_cross}.\qed
\end{proof}

\begin{lemma}
Let $\fs=\{v_1, v_2, \ldots, v_s\}$, $s \geq 4$ be a non-triangular
face of $G_p$. Then, any stick of $\fs$ is only crossed by some
opposite stick of $\mathcal{F}_s$.
\label{lem:stick_cross_stick}
\end{lemma}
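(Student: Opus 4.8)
The plan is to first reduce the statement to a claim about stick--stick crossings, and then to rule out a crossing between two sticks of the \emph{same} orientation by a local redrawing. For the reduction I would combine the preceding lemmas: by Lemma~\ref{lem:stick_cross_once} every stick of $\fs$ is crossed exactly once in the interior of $\fs$, and by Lemma~\ref{lem:stick_cross_pt} this crossing is not with a middle part. Since inside $\fs$ every edge-segment is either a stick or a middle part of $\fs$, the unique segment crossing a given stick must itself be a stick of $\fs$. Hence it suffices to prove that this partner stick is opposite, that is, that two sticks of the same orientation never cross.

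Assume, for a contradiction, that some stick is crossed by a stick of the same orientation; by Lemma~\ref{lem:stick_no_far} both are short, and without loss of generality both are right. A short right stick from $v_j$ runs from $v_j$ to an interior point of $(v_{j+1},v_{j+2})$, so, regarding $\fs$ as the boundary of a topological disk, its two endpoints on the boundary walk are $v_j$ and that interior point. Applying the interleaving criterion for two chords of a disk to cross, a direct check (the two non-consecutive cases fail to interleave, while the two consecutive ones do) shows that two right sticks cross \emph{only} when they emanate from consecutive vertices. After relabelling I may therefore assume the crossing pair is $a=(v_1,v_1')$ crossing $(v_2,v_3)$ and $b=(v_2,v_2')$ crossing $(v_3,v_4)$, meeting at an interior point $x$. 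Because each stick is crossed exactly once (Lemma~\ref{lem:stick_cross_once}), the sub-arcs $v_1x$ of $a$ and $v_2x$ of $b$ are crossing-free.

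I would then derive the contradiction by a local redrawing at $x$, in the spirit of the proof of Lemma~\ref{lem:stick_no_far}: replace the edge carrying $a$ by the chord $(v_1,v_3)$ and redirect $b$ to emanate from $v_3$ (or, symmetrically, replace the edge carrying $b$ by $(v_2,v_4)$ and redirect $a$ to $v_2$). This deletes the crossing at $x$, and since the redirected stick together with the new chord only reuses regions already swept by $a$ and $b$, the modified drawing remains $3$-planar while gaining an uncrossed chord; depending on the case this produces either a larger planar substructure or a drawing with fewer crossings, contradicting the maximality of $G_p$ or the crossing-minimality of $G$. The main obstacle is the \emph{validity} of this redrawing. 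One must check that the new chord is genuinely uncrossed --- i.e.\ that no further edge enters the region cut off by the crossing-free sub-arcs near $x$ and the incident boundary edges --- and, more delicately, that no rerouted edge becomes homotopic to an already present parallel edge or to a boundary edge of $\fs$, which would violate the non-simple drawing convention. This homotopy bookkeeping is exactly the step that exploits the same-orientation hypothesis: for two opposite sticks the corresponding reconnection is forced to duplicate an incident edge homotopically and so cannot be performed, which is why opposite sticks may legitimately cross. I expect this verification to require a short sub-case analysis, according to how far the edges carrying $a$ and $b$ extend beyond $x$ and to the degenerate situations in which $\fs$ is non-simple and some of $v_1',v_2',v_3,v_4$ coincide.
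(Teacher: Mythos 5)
Your reduction is sound and matches the paper: by Lemma~\ref{lem:stick_cross_once} each stick has exactly one crossing inside $\fs$, by Lemma~\ref{lem:stick_cross_pt} that crossing is not with a middle part, so the partner is a stick; and your interleaving argument correctly pins down the only possible same-orientation configuration (two right sticks from consecutive vertices, $a=(v_1,v_1')$ crossing $(v_2,v_3)$ and $b=(v_2,v_2')$ crossing $(v_3,v_4)$), which is exactly the paper's w.l.o.g.\ starting point. The gap is in your final redrawing step, and it is not the ``homotopy bookkeeping'' you anticipate --- homotopy plays no role in the paper's argument here. The problem is blocking edges. Your first alternative (replace $a$ by the chord $(v_1,v_3)$ and redirect $b$ to emanate from $v_3$) fails whenever some edge $e$ crosses $(v_2,v_3)$ between $a$'s crossing point and $v_3$: such an $e$ is forced to exit through $(v_3,v_4)$ between $v_3$ and $b$'s crossing (it cannot recross $a$ or $b$, which are already at their one allowed interior crossing), so it cuts off the corner at $v_3$; then your new chord $(v_1,v_3)$ is crossed by $e$ (so the planar substructure does not grow), and the rerouted edge from $v_3$ must also cross $e$, which can push $e$ to four crossings. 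Your symmetric alternative fails in the same way whenever some edge $e'$ crosses $(v_3,v_4)$ between $b$'s crossing point and $v_4$. Crucially, you cannot argue that one of the two alternatives always applies: both blockers $e$ and $e'$ can be present simultaneously, and this is the generic case, not a degenerate one.

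The paper's proof is built precisely around this difficulty. It first shows that if the direct replacement of $a$ by $(v_1,v_3)$ is impossible, the blocker $e$ exists and is a passing-through edge (Lemma~\ref{lem:stick_cross_pt}), and symmetrically that a blocker $e'$ crossing $(v_3,v_4)$ closest to $v_4$ exists (else $b$ could be replaced by the chord $(v_2,v_4)$). Since $(v_3,v_4)$ then carries its full budget of three crossings ($e$, $b$, $e'$), the paper performs an exchange that \emph{sacrifices a blocker}: remove $e'$ and $a$, and insert the chord $(v_2,v_4)$ (drawn along $b$ and then along $(v_3,v_4)$, now crossing-free because $a$ and $e'$ are gone while $b$ itself is retained) together with the edge $(v_4,v_1')$; see Figure~\ref{fig:stick_cross_stick_2}. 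This keeps the edge count equal while enlarging the planar substructure, contradicting the maximality of $G_p$. So the missing idea in your proposal is that the valid exchange is not ``replace one stick and redirect the other'' but a swap that removes one of the blocking edges and reuses the freed channel; without it, the verification you defer cannot be completed.
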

\begin{proof}
By Lemma~\ref{lem:stick_no_far}, each stick of $\fs$ is
short. By Lemma~\ref{lem:stick_cross_once}, each stick of $\fs$ is
crossed exactly once within $\fs$ and this crossing is not with a
middle part due to Lemma~\ref{lem:stick_cross_pt}. For a proof by
contradiction, consider two crossing sticks that are not opposite
and assume w.l.o.g.~that the first stick emanates from vertex $v_1$
(towards vertex $v_1'$) and crosses edge $(v_2,v_3)$, while the
second stick emanates from vertex $v_2$ (towards vertex $v_2'$) and
crosses edge $(v_3,v_4)$; see Figure~\ref{fig:stick_cross_stick_1}.

\begin{figure}[t]
	\centering
    \begin{minipage}[b]{.19\textwidth}
        \centering
        \subfloat[\label{fig:stick_cross_stick_1}{}]
        {\includegraphics[width=\textwidth,page=1]{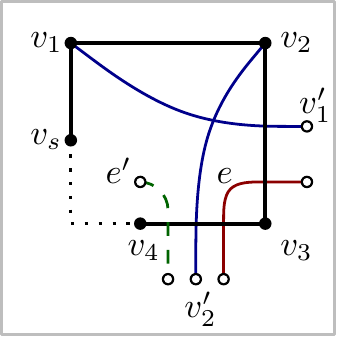}}
    \end{minipage}
    \begin{minipage}[b]{.19\textwidth}
        \centering
        \subfloat[\label{fig:stick_cross_stick_2}{}]
        {\includegraphics[width=\textwidth,page=2]{images/stick_cross_stick}}
    \end{minipage}
	\begin{minipage}[b]{.19\textwidth}
        \centering
        \subfloat[\label{fig:two_stick_1}{}]
        {\includegraphics[width=\textwidth,page=1]{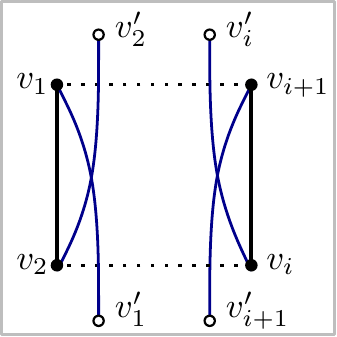}}
    \end{minipage}
	\begin{minipage}[b]{.19\textwidth}
        \centering
        \subfloat[\label{fig:two_stick_2}{}]
        {\includegraphics[width=\textwidth,page=2]{images/two_stick}}
    \end{minipage}
    \caption{%
    Different configurations used in
    (a)-(b)~Lemma~\ref{lem:stick_cross_stick} and
    (c)-(d)~Lemma~\ref{lem:two_stick}.}
    \label{fig:stick_cross_stick_and_two_stick}
\end{figure}

If we can replace $(v_1,v_1')$ with the chord
$(v_1,v_3)$, then the maximal planar substructure of the derived
graph would have more edges than $G_p$; contradicting the maximality
of $G_p$. Thus, there exists an edge, say $e$, that crosses $(v_2,
v_3)$ below $(v_1,v_1')$. By Lemma~\ref{lem:stick_cross_pt}, edge $e$
is passing through $\fs$. Symmetrically, we can prove that there
exists an edge, say $e'$, which crosses $(v_3,v_4)$ right next to
$v_4$, that is, $e'$ defines the closest crossing point to $v_4$
along $(v_3,v_4)$. Note that $e'$ can be either a passing through
edge or a stick of $\fs$.
We proceed by removing from $G$ edges $e'$ and $(v_1,v_1')$ and by
replacing them by the chord $(v_2,v_4)$ and edge $(v_4, v_1')$; see
Figure~\ref{fig:stick_cross_stick_2}. The maximal planar substructure
of the derived graph has more edges than $G_p$ (in the presence of
edge $(v_2,v_4)$), a contradiction.\qed
\end{proof}

\begin{lemma}
Let $\fs=\{v_1, v_2, \ldots, v_s\}$, $s \geq 4$ be a non-triangular
face of $G_p$. Then, $\mathcal{F}_s$ has exactly two sticks.
\label{lem:two_stick}
\end{lemma}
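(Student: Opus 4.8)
The natural entry point is to combine the structural facts already proved for the sticks of $\fs$. By Lemma~\ref{lem:stick_no_far} every stick is short, by Lemma~\ref{lem:stick_cross_once} it is crossed exactly once inside $\fs$, by Lemma~\ref{lem:stick_cross_pt} this unique crossing is not with a middle part, and hence by Lemma~\ref{lem:stick_cross_stick} it is a crossing with an \emph{opposite} stick. Since the partner stick is itself crossed exactly once, the two sticks are mutually each other's unique crosser, so the sticks of $\fs$ decompose into disjoint \emph{crossing pairs}, each formed by one left and one right stick meeting in a single point. In particular the number of sticks is even, and by Lemma~\ref{lem:stick_exist} it is at least two. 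Thus the statement reduces to showing that $\fs$ cannot contain two distinct crossing pairs.

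The next step is to read off the local geometry of a single pair. As both sticks are short, a right stick leaving $v_i$ crosses $(v_{i+1},v_{i+2})$ and its opposite left stick leaves $v_{i+2}$ and crosses $(v_i,v_{i+1})$; together with the two boundary arcs at $v_{i+1}$ they bound a small region that isolates the single vertex $v_{i+1}$. I would first record that this isolated vertex carries no stick at all: any edge leaving $v_{i+1}$ into $\fs$ starts inside that region and, to become a stick, would have to exit it, which is impossible because the two bounding sticks are already crossed by each other and, by Lemma~\ref{lem:stick_cross_once}, cannot be crossed a second time. An immediate consequence is that two crossing pairs can never isolate two \emph{consecutive} vertices, since the shared neighbour would have to be simultaneously isolated by one pair and stick-bearing for the other. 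This restricts a hypothetical pair of pairs to the configurations in which the two isolated vertices lie at distance two along $\fs$ (the pairs sharing a single stick-bearing vertex) or farther apart (the pairs being vertex-disjoint); these are the configurations illustrated in Figures~\ref{fig:two_stick_1} and~\ref{fig:two_stick_2}.

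Finally, in each such configuration I would aim for a contradiction by the local-surgery strategy used throughout Section~\ref{sec:density}: delete a carefully chosen subset of the four sticks together with the passing-through edges that obstruct the obvious chords, and reinsert crossing-free chords such as $(v_i,v_{i+2})$ routed through the region freed next to an isolated vertex, re-terminating the affected edges at $v_i$ or $v_{i+2}$ as in Lemma~\ref{lem:stick_cross_stick}. The modified drawing remains $3$-planar with no fewer edges than $G$, yet its maximal planar substructure gains an uncrossed chord, contradicting the maximality of $G_p$ (or, should the exchange instead remove a crossing, the crossing minimality of $G$). I expect the hard part to be exactly this last step: one must verify, sub-case by sub-case, that the passing-through edges crossing the two boundary edges at an isolated vertex can always be absorbed or rerouted so that the inserted chord is genuinely crossing-free and no pair of homotopic edges is created. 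This is precisely where the presence of a \emph{second} pair supplies the extra room that is absent in the single-pair situation permitted by Lemma~\ref{lem:stick_exist}.
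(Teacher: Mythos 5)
Your first paragraph is correct and is exactly the paper's opening move: Lemmas~\ref{lem:stick_cross_once}, \ref{lem:stick_cross_pt} and \ref{lem:stick_cross_stick} force the sticks of $\fs$ into disjoint pairs of mutually crossing opposite sticks, and Lemma~\ref{lem:stick_exist} gives at least one pair. The first genuine gap is in your second paragraph, where you ``read off'' the local geometry of a pair: you assert that the opposite partner of a right stick from $v_i$ must leave from $v_{i+2}$ and cross $(v_i,v_{i+1})$, so that the pair isolates the single vertex $v_{i+1}$. That does not follow from shortness. A right stick from $v_j$ (crossing $(v_{j+1},v_{j+2})$) can equally well be crossed by a left stick from the \emph{adjacent} vertex $v_{j+1}$ (crossing $(v_{j-1},v_j)$): their endpoints interleave on the boundary of $\fs$, so these two short sticks must cross, and they isolate no vertex at all. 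This adjacent-vertex configuration is precisely the one the paper works with (Figure~\ref{fig:two_stick_1}, and explicitly in the proof of Theorem~\ref{thm:triangulated}, where the two sticks emanate from $v_1$ and $v_2$ and cross $(v_2,v_3)$ and $(v_1,v_s)$). Your ``isolating'' configuration is in fact the easy, degenerate one: there the chord $(v_i,v_{i+2})$ shares an endpoint with each stick of the pair and, since middle parts cannot hug the corners $v_i$ or $v_{i+2}$ (Lemma~\ref{lem:stick_cross_pt}) and everything else is short (Lemmas~\ref{lem:pt_no_far}, \ref{lem:stick_no_far}), that chord can be inserted crossing-free without deleting anything, contradicting the optimality of $G$ immediately. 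So the whole framework of ``isolated vertices'' and ``pairs cannot isolate consecutive vertices'' analyzes a configuration that cannot survive, while the configuration that actually has to be refuted never appears in your argument.

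The second gap is that the decisive step is announced but never performed: you say you ``would aim for a contradiction by the local-surgery strategy'' and concede that the sub-case verification is the hard part. That verification \emph{is} the proof of uniqueness. The paper's argument here is a single concrete exchange: with the two pairs emanating from $v_1,v_2$ and from $v_i,v_{i+1}$ ($2<i<s$), delete the sticks $(v_2,v_2')$ and $(v_i,v_i')$ and insert the chords $(v_1,v_i)$ and $(v_2,v_{i+1})$ inside $\fs$. Because every surviving stick and middle part of $\fs$ is short and hence confined near the corners, these chords cross no edge of $G_p$; the edge count is unchanged, so the derived graph is still optimal, and $G_p$ together with (either) one of the new chords is a strictly larger planar substructure, contradicting property~(a) in the choice of $G$. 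Your plan never specifies which edges to delete, how the new chords are routed, why $3$-planarity is preserved, or why no homotopic parallel edges arise, so the uniqueness half of the lemma remains unproven as written.
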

\begin{proof}
By Lemmas~\ref{lem:stick_exist} and~\ref{lem:stick_cross_stick} there
exists at least one pair of opposite crossing sticks. To prove the
uniqueness, assume  that $\fs$ has two pairs of
crossing opposite sticks, say $(v_1,v_1')$, $(v_2,v_2')$ and
$(v_i,v_i')$, $(v_{i+1},v_{i+1}')$, $2 < i < s$; see
Figure~\ref{fig:two_stick_1}. We remove edges
$(v_2,v_2')$ and $(v_i,v_i')$ and replace them by 
$(v_1,v_i)$ and $(v_2, v_{i+1})$; see Figure~\ref{fig:two_stick_2}.
By Lemmas~\ref{lem:pt_no_far} and \ref{lem:stick_no_far}, the newly
introduced edges cannot be involved in crossings. The maximal planar
substructure of the derived graph has more edges than $G_p$ (in the
presence of $(v_1,v_i)$ or $(v_2, v_{i+1})$); a contradiction.\qed
\end{proof}
\noindent We are ready to state the main theorem of this section. 
\begin{theorem}
The planar substructure $G_p$ of a crossing-minimal optimal
$3$-planar graph $G$ is fully triangulated.
\label{thm:triangulated}
\end{theorem}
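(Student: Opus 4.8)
The plan is to argue by contradiction. Suppose $G_p$ is not fully triangulated. Since $G_p$ is spanning and connected (Lemma~\ref{lem:connected}), Euler's formula guarantees a face $\fs=\{v_1,\ldots,v_s\}$ of $G_p$ with $s\ge 4$, and it suffices to contradict either the optimality of $G$ or the maximality of $G_p$. First I would collect the rigid description of $\fs$ assembled in the previous lemmas: by Lemma~\ref{lem:two_stick} the face has exactly two sticks; by Lemmas~\ref{lem:stick_exist} and~\ref{lem:stick_cross_stick} they form an opposite pair crossing each other; by Lemma~\ref{lem:stick_cross_once} this is their unique crossing inside $\fs$; by Lemma~\ref{lem:stick_cross_pt} neither stick meets a middle part; and by Lemma~\ref{lem:pt_no_far} every remaining edge part inside $\fs$ is a short middle part crossing two consecutive edges of $\fs$. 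Up to relabeling and reflection I would fix the representative configuration in which the right stick emanates from $v_1$ and crosses $(v_2,v_3)$ and the left stick emanates from $v_3$ and crosses $(v_1,v_2)$, so that their crossing clips the corner at $v_2$; the remaining opposite-pair placements are symmetric.

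The decisive step is a local surgery in the spirit of Lemmas~\ref{lem:stick_cross_stick}--\ref{lem:two_stick}. I would delete the two edges carrying the sticks and re-insert two edges, one of them the chord $(v_1,v_3)$ that cuts $v_2$ off into a triangle. Since exactly two edges are removed and two are added, the edge count of $G$ is unchanged, so the resulting graph is still optimal and (with a little care in routing the second, possibly crossing, edge) still $3$-planar; meanwhile its planar substructure contains the extra crossing-free chord $(v_1,v_3)$, contradicting the maximality of $G_p$. Deleting both sticks empties the corner at $v_2$ of all stick parts, so the chord $(v_1,v_3)$ can be drawn crossing-free and non-homotopic to any existing edge, provided no middle part also clips the corner at $v_2$. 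For the smallest case $s=4$ a direct argument in the style of Lemma~\ref{lem:uncrossed_edges} finishes the job.

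The main obstacle is exactly this proviso: a short middle part that crosses both $(v_1,v_2)$ and $(v_2,v_3)$ would be crossed by the chord $(v_1,v_3)$. Because each of these two edges is already crossed once by a stick, $3$-planarity admits at most two further crossings on each, hence at most two such corner-clipping middle parts. I would therefore finish with a short case analysis on their number, in each case rerouting or deleting-and-reinserting the offending middle parts as in Lemma~\ref{lem:stick_cross_pt}, while checking that no new crossing and no homotopic duplicate is created. Once a crossing-free chord is secured at no cost in edges, the maximality of $G_p$ is contradicted; therefore $G_p$ has no face with $s\ge 4$, every face is a triangle, and $|E_p|=3n-6$.
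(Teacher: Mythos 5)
Your overall skeleton coincides with the paper's: accumulate Lemmas~\ref{lem:stick_cross}--\ref{lem:two_stick} to reduce to a face with exactly two opposite, mutually crossing sticks, then perform one local replacement that plants a crossing-free chord and contradicts the maximality of $G_p$. The decisive flaw is your claim that your configuration is representative ``up to relabeling and reflection.'' It is not. After the lemmas, the two crossing opposite sticks may also emanate from \emph{adjacent} vertices of $\fs$ --- from $v_1$ crossing $(v_2,v_3)$ and from $v_2$ crossing $(v_s,v_1)$ --- and no rotation or reflection of the face cycle maps origins at cyclic distance two (your $v_1$ and $v_3$) to adjacent origins; a further inequivalent placement has both sticks crossing the same edge $(v_2,v_3)$, from $v_1$ and $v_4$. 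The adjacent-origin placement is exactly the one the paper analyses, and it is where all the work lies. Indeed, in \emph{your} placement the theorem is immediate, for a reason your surgery obscures: by Lemmas~\ref{lem:pt_no_far} and~\ref{lem:stick_cross_pt} every middle part clips a corner of $\fs$ and crosses neither stick, so it is trapped in one of the four regions into which the two crossing sticks cut $\fs$; consequently no middle part can clip $v_1$ or $v_3$ at all, and those clipping $v_2$ or $v_4,\dots,v_s$ never meet a chord $(v_1,v_3)$ routed tightly along the two sticks on the $v_4$-side of their crossing point. That chord can therefore simply be \emph{added} crossing-free (it is non-homotopic to any existing edge, having $v_2$ on one side and $v_4$ on the other), contradicting the optimality of $G$ with no deletions, no second edge, and no case analysis on middle parts. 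In the adjacent-origin configuration, by contrast, every chord leaving $\{v_1,v_2\}$ must cross a stick, middle parts clipping $v_2$ cannot exist (same confinement argument), and the genuine obstruction to planting $(v_1,v_3)$ is passing-through edges clipping $v_3$; eliminating those is the hard half of the paper's proof --- the existence of such an edge $e$, the case distinction on a second such edge $e'$, the auxiliary edge $e''$ surrounding $v_4$, and the alternative chord $(v_2,v_4)$. Your proposal never touches this case.

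Even within your own configuration the argument is incomplete. The second re-inserted edge is never identified, so $3$-planarity, non-homotopy, and preservation of the edge count --- the three facts the contradiction rests on --- cannot be verified. The ``short case analysis'' on corner-clipping middle parts, promised ``as in Lemma~\ref{lem:stick_cross_pt},'' is precisely where the difficulty of this theorem sits (note that such a middle part may carry a third crossing \emph{outside} $\fs$, so it cannot just be rerouted through the new chord), and the $s=4$ appeal to Lemma~\ref{lem:uncrossed_edges} is left without content. So the proposal reproduces the paper's strategy in outline, but its configuration analysis is wrong and the core case of the proof is missing.
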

\begin{proof}
For a proof by contradiction, assume that $G_p$ has a non-triangular
face $\fs=\{v_1,v_2,\ldots,v_s\}$, $s \geq 4$. By
Lemmas~\ref{lem:stick_cross_once}, \ref{lem:stick_cross_stick} and
\ref{lem:two_stick}, face $\fs$ has exactly two opposite
sticks, that cross each other. Assume w.l.o.g.~that these two sticks
emanate from $v_1$ and $v_2$ (towards $v_1'$ and $v_2'$) and exit
$\fs$ by crossing $(v_2,v_3)$ and $(v_1,v_s)$, respectively; recall
that by Lemma~\ref{lem:stick_no_far} all sticks are short; see
Figure~\ref{fig:triangulated_1}.

If we can replace $(v_1,v_1')$ with the chord $(v_1,v_3)$, then the
maximal planar substructure of the derived graph would have more
edges than $G_p$; contradicting the maximality of $G_p$. Thus, there
exists an edge, say $e$, that crosses $(v_2, v_3)$ below
$(v_1,v_1')$. By Lemma~\ref{lem:two_stick}, edge $e$ is passing
through $\fs$. We consider two cases:%
\begin{inparaenum}[(a)]
\item \label{s:c1} edge $(v_2,v_3)$ is only crossed by $e$ and $(v_1,v_1')$,
\item \label{s:c2} there is a third edge, say $e'$, that crosses
$(v_2,v_3)$ (which by Lemma~\ref{lem:two_stick} is also passing
through $\fs$).
\end{inparaenum}

\begin{figure}[t]
	\centering
    \begin{minipage}[b]{.19\textwidth}
        \centering
        \subfloat[\label{fig:triangulated_1}{}]
        {\includegraphics[width=\textwidth,page=1]{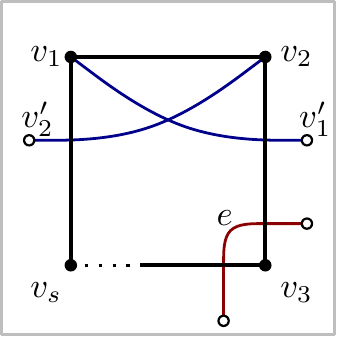}}
    \end{minipage}
    \begin{minipage}[b]{.19\textwidth}
        \centering
        \subfloat[\label{fig:triangulated_2}{}]
        {\includegraphics[width=\textwidth,page=2]{images/triangulated}}
    \end{minipage}
	\begin{minipage}[b]{.19\textwidth}
        \centering
        \subfloat[\label{fig:triangulated_3}{}]
        {\includegraphics[width=\textwidth,page=3]{images/triangulated}}
    \end{minipage}
	\begin{minipage}[b]{.19\textwidth}
        \centering
        \subfloat[\label{fig:triangulated_4}{}]
        {\includegraphics[width=\textwidth,page=4]{images/triangulated}}
    \end{minipage}
    \caption{Different configurations used in Theorem~\ref{thm:triangulated}.}
    \label{fig:triangulated}
\end{figure}

In Case~(\ref*{s:c1}), we can remove from $G$ edges $e$ and
$(v_1,v_1')$, and replace them by $(v_1,v_3)$ and the edge from $v_2$
to the endpoint of $e$ that is below $(v_3,v_4)$; see
Figure~\ref{fig:triangulated_2}. In Case~(\ref{s:c2}), there has to
be a (passing through) edge, say $e''$, surrounding $v_4$ (see
Figure~\ref{fig:triangulated_3}), as otherwise we could replace $e'$
with a stick emanating from $v_4$ towards the endpoint of $e'$ that
is to the right of $(v_2,v_3)$, which contradicts
Lemma~\ref{lem:two_stick}. We proceed by removing from $G$ edges
$e''$ and $(v_1,v_1')$ and by replacing them by $(v_2,v_4)$ and the
edge from $v_2$ to the endpoint of $e''$ that is associated with
$(v_3,v_4)$; see Figure~\ref{fig:triangulated_4}. The maximal planar
substructure of the derived graph has more edges than $G_p$ (in the
presence of $(v_1,v_2)$ in Case~(\ref*{s:c1}) and $(v_2,v_4)$ in
Case~(\ref*{s:c2})), which contradicts the maximality of $G_p$. Since
$G_p$ is connected, there cannot exist a face consisting of only two
vertices.\qed
\end{proof}

\section{Discussion and Conclusion}
\label{sec:conclusions}
This paper establishes a tight upper bound on the number of edges of
non-simple $3$-planar graphs containing no homotopic parallel edges
or self-loops. Our work is towards a complete characterization of
all optimal such graphs. In addition, we believe that our technique
can be used to achieve better bounds for larger values of $k$. We
demonstrate it for the case where $k=4$, where the known bound for
simple graphs is due to Ackerman~\cite{Ackerman15}.

If we could prove that a crossing-minimal optimal $4$-planar graph
$G=(V,E)$ has always a fully triangulated planar substructure
$G_p=(V,E_p)$ (as we proved in Theorem~\ref{thm:triangulated} for the
corresponding $3$-planar ones), then it is not difficult to prove a
tight bound on the number of edges for $4$-planar graphs.
Similar to Lemma~\ref{lem:no_of_sticks}, we can argue that no
triangle of $G_p$ has more than $4$ sticks. Then, we associate each
triangle of $G_p$ with $4$ sticks to a neighboring triangle with at
most $2$ sticks. This would imply $t_4 \leq t_1 + t_2$, where $t_i$
denotes the number of triangles of $G_p$ with exactly $i$ sticks. So,
we would have $|E|-|E_p| = (4t_4 + 3t_3 + 2t_2 + t_1)/2 \leq
3(t_4+t_3+t_2 + t_1)/2 = 3(2n-4)/2 = 3n-6$. Hence, the number of
edges of a $4$-planar graph $G$ is at most $6n - 12$. We
conclude with some open questions.
\begin{itemize}
\item A nice consequence of our work would be the complete
characterization of optimal $3$-planar graphs, as exactly those
graphs that admit drawings where the set of crossing-free edges form
hexagonal faces which contain $8$ additional edges each

\item We also believe that for simple $3$-planar graphs
(i.e., where even non-homotopic parallel edges are not allowed) the
corresponding bound is $5.5n-15$.
  
\item We conjecture that the maximum number of edges of $5$- and
$6$-planar graphs are $\frac{19}{3}n-O(1)$ and $7n-14$, respectively.

\item More generally, is there a closed function on $k$ which
describes the maximum number of edges of a $k$-planar graph for
$k>3$? Recall the general upper bound of $4.1208 \sqrt k n$ by Pach
and T\'oth~\cite{PachT97}.
\end{itemize}
\paragraph*{Acknowledgment:} We thank E. Ackerman for bringing to our
attention \cite{Ackerman15} and~\cite{Pach2006}.

\bibliographystyle{splncs03}
\bibliography{references}

\begin{thebibliography}{10}
\providecommand{\url}[1]{\texttt{#1}}
\providecommand{\urlprefix}{URL }

\bibitem{Ackerman15}
Ackerman, E.: On topological graphs with at most four crossings per edge. CoRR
  abs/1509.01932 (2015)

\bibitem{AAPPS97}
Agarwal, P.K., Aronov, B., Pach, J., Pollack, R., Sharir, M.: Quasi-planar
  graphs have a linear number of edges. Combinatorica  17(1),  1--9 (1997)

\bibitem{ABGH12}
Auer, C., Brandenburg, F., Glei{\ss}ner, A., Hanauer, K.: On sparse maximal
  2-planar graphs. In: Didimo, W., Patrignani, M. (eds.) GD. LNCS, vol. 7704,
  pp. 555--556. Springer (2012)

\bibitem{BB0R15}
Bekos, M.A., Bruckdorfer, T., Kaufmann, M., Raftopoulou, C.N.: 1-planar graphs
  have constant book thickness. In: Bansal, N., Finocchi, I. (eds.) ESA. LNCS,
  vol. 9294, pp. 130--141. Springer (2015)

\bibitem{Borodin95}
Borodin, O.V.: A new proof of the 6 color theorem. J. of Graph Theory  19(4),
  507--521 (1995)

\bibitem{Brandenburg14}
Brandenburg, F.J.: 1-visibility representations of 1-planar graphs. J. Graph
  Algorithms Appl.  18(3),  421--438 (2014)

\bibitem{BEGGHR12}
Brandenburg, F.J., Eppstein, D., Glei{\ss}ner, A., Goodrich, M.T., Hanauer, K.,
  Reislhuber, J.: On the density of maximal 1-planar graphs. In: Didimo, W.,
  Patrignani, M. (eds.) GD. LNCS, vol. 7704, pp. 327--338. Springer (2012)

\bibitem{CHKK13}
Cheong, O., Har{-}Peled, S., Kim, H., Kim, H.: On the number of edges of
  fan-crossing free graphs. In: Cai, L., Cheng, S., Lam, T.W. (eds.) ISAAC.
  LNCS, vol. 8283, pp. 163--173. Springer (2013)

\bibitem{BETT99}
{Di Battista}, G., Eades, P., Tamassia, R., Tollis, I.G.: Graph Drawing:
  Algorithms for the Visualization of Graphs. Prentice-Hall (1999)

\bibitem{DEL11}
Didimo, W., Eades, P., Liotta, G.: Drawing graphs with right angle crossings.
  Theoretical Computer Science  412(39),  5156--5166 (2011)

\bibitem{DGMW11}
Dujmovic, V., Gudmundsson, J., Morin, P., Wolle, T.: Notes on large angle
  crossing graphs. Chicago J. Theor. Comput. Sci.  2011 (2011)

\bibitem{FPP90}
de~Fraysseix, H., Pach, J., Pollack, R.: How to draw a planar graph on a grid.
  Combinatorica  10(1),  41--51 (1990)

\bibitem{GB07}
Grigoriev, A., Bodlaender, H.L.: Algorithms for graphs embeddable with few
  crossings per edge. Algorithmica  49(1),  1--11 (2007)

\bibitem{Harary91}
Harary, F.: Graph theory. Addison-Wesley (1991)

\bibitem{HN15}
Hong, S., Nagamochi, H.: Testing full outer-2-planarity in linear time. In:
  Mayr, E.W. (ed.) WG. LNCS, vol. 9224, pp. 406--421. Springer (2015)

\bibitem{KU14}
Kaufmann, M., Ueckerdt, T.: The density of fan-planar graphs. CoRR
  abs/1403.6184 (2014)

\bibitem{KW1999}
Kaufmann, M., Wagner, D. (eds.): Drawing Graphs, Methods and Models, LNCS, vol.
  2025. Springer (2001)

\bibitem{Pach2006}
Pach, J., Radoicic, R., Tardos, G., T{\'{o}}th, G.: Improving the crossing
  lemma by finding more crossings in sparse graphs. Discrete {\&} Computational
  Geometry  36(4),  527--552 (2006)

\bibitem{PachT97}
Pach, J., T{\'{o}}th, G.: Graphs drawn with few crossings per edge.
  Combinatorica  17(3),  427--439 (1997)

\bibitem{Ringel65}
Ringel, G.: Ein sechsfarbenproblem auf der kugel. Abhandlungen aus dem
  Mathematischen Seminar der Universit{\"a}t Hamburg (in German)  29,  107--117
  (1965)

\bibitem{Tamassia87}
Tamassia, R.: On embedding a graph in the grid with the minimum number of
  bends. {SIAM} J. Comput.  16(3),  421--444 (1987)

\bibitem{Tutte63}
Tutte, W.T.: How to draw a graph. Proc. London Math. Soc.  3(13),  743--767
  (1963)

\end{thebibliography}

\IfEqCase{\ver}{%
    {arxiv}{\newpage
\appendix
\section*{\LARGE Appendix}

\section{A class of 3-planar graphs with 5.5n--11 edges}
\label{app:bound}
In this section, we demonstrate an infinite class of $3$-planar
graphs with $n$ vertices and exactly $\frac{11n}{2}-11$~edges.

\begin{theorem}
There exist infinitely many $3$-planar graphs with $n$ vertices and
$\frac{11n}{2}-11$ edges.
\label{thm:graphclass}
\end{theorem}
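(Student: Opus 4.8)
The plan is to construct, for infinitely many values of $n$, an explicit $3$-planar graph $G_n$ together with a drawing that attains the bound of Theorem~\ref{thm:main} with equality. The construction is guided by the equality analysis hidden inside the proof of Theorem~\ref{thm:main}: tightness forces $t_0=t_1=0$ and $t_3=t_2=n-2$, and the characterization announced in Section~\ref{sec:conclusions} suggests that the uncrossed edges of an optimal drawing should bound hexagonal faces, each of which is filled with eight further edges. I would therefore take as backbone a crossing-free skeleton whose bounded faces are hexagons, and realize each hexagon by a fixed \emph{gadget} of eight interior edges.

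First I would design the gadget in isolation: a hexagon $u_1u_2\ldots u_6$ with a prescribed set of eight chords drawn so that each chord is crossed at most three times and the six boundary edges stay uncrossed, so that adjacent hexagons may share them. The core of the gadget consists of the three long diagonals $u_1u_4$, $u_2u_5$, $u_3u_6$, which pairwise cross near the centre (each incurring two crossings), while the remaining chords are routed so that no edge exceeds three crossings. I would verify directly that this local configuration is $3$-planar and that three of the eight chords can be selected to triangulate the hexagon without mutual crossings, so that the induced maximal planar substructure is a triangulation, exactly as required by the hypotheses of Theorem~\ref{thm:main}.

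Next I would glue copies of the gadget along a honeycomb-like skeleton and let it grow, along the lines of Pach et al.~\cite{Pach2006}, to obtain the infinite family. Counting is then routine: each bounded hexagon is incident to six skeleton edges, each shared between two faces, so it is charged three skeleton edges plus eight interior edges, i.e.\ eleven edges; since a (nearly) $3$-regular skeleton on $n$ vertices has about $n/2$ hexagonal faces, the edge total is about $11n/2$, and a careful application of Euler's formula to the finite drawing pins down the additive term as $-11$. Because the densest known simple $3$-planar graphs reach only $\frac{11}{2}n-15$ edges, the last four edges must be recovered by exploiting the non-simple regime allowed in this paper, namely by inserting non-homotopic parallel edges or self-loops near the outer face.

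The step I expect to be the main obstacle is precisely this boundary bookkeeping. A finite $3$-regular planar graph cannot have all of its faces hexagonal, so the outer region and the partially filled boundary hexagons must be handled by hand; the delicate part is to arrange them so that (i) every boundary-incident edge still has at most three crossings and (ii) the global edge count lands on exactly $\frac{11}{2}n-11$ rather than merely $\frac{11}{2}n-O(1)$. Once a suitable boundary cap is fixed, infinitude is immediate by enlarging the patch, and $3$-planarity of $G_n$ follows gadget by gadget from the local verification.
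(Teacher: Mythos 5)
Your local gadget (a hexagon filled with eight chords, each crossed at most three times, with the six boundary edges kept crossing-free) is exactly the paper's building block, but your global assembly has a genuine gap, and it is the one you flag yourself: the outer boundary. With a honeycomb-like, (nearly) $3$-regular skeleton in which only the \emph{bounded} faces are hexagons, the count can never land on $\frac{11}{2}n-11$. Quantitatively: if the skeleton has $n$ vertices, $m$ edges, and outer face of length $\ell$, with all bounded faces hexagonal, then double counting and Euler's formula give $4m=6n-6-\ell$, and filling each bounded face with eight edges yields $m+8(m-n+1)=\frac{11(n-1)}{2}-\frac{9\ell}{4}$ edges in total, i.e.\ a shortfall of $\frac{9\ell}{4}-\frac{11}{2}$ relative to the target. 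This shortfall is at least $8$ (since $\ell\geq 6$) and, for a genuinely honeycomb-like patch, $\ell$ grows like $\sqrt{n}$, so the shortfall is unbounded. Hence no constant number of extra parallel edges or self-loops ``near the outer face'' can close it; your figure of four missing edges comes from comparing with the simple bound $\frac{11}{2}n-15$, which has nothing to do with the deficit of this particular construction.

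The missing idea, which is how the paper proceeds, is to choose the skeleton so that \emph{every} face, the unbounded one included, is a hexagon, and to embed eight edges in every face, the outer face as well. Such skeletons are not $3$-regular---this is how the parity obstruction you mention is circumvented---and necessarily contain vertices of degree two: they have $n$ vertices, $\frac{3(n-2)}{2}$ edges and $\frac{n-2}{2}$ hexagonal faces, the smallest example being the $6$-cycle (both of whose faces are hexagons), with larger ones existing whenever $n-2$ is divisible by $4$ (e.g.\ the once-subdivided $K_4$ for $n=10$). Filling all $\frac{n-2}{2}$ faces then gives exactly $\frac{3(n-2)}{2}+8\cdot\frac{n-2}{2}=\frac{11n}{2}-11$ edges with no boundary bookkeeping at all, which is the whole proof. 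Note also that the required non-simplicity arises automatically---the same pair of hexagon vertices may be joined by chords inside and outside, which are non-homotopic since both regions they bound contain vertices---rather than through ad hoc insertions at the boundary; and your extra requirement that three of the eight chords triangulate the hexagon is unnecessary, since the construction only needs to exhibit the edge count, not to satisfy the hypotheses of Theorem~\ref{thm:main}.
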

\begin{proof}
Let $n \geq 6$ be a positive integer, such that $n-2$ is divisible
by $4$. Figure~\ref{fig:construction} illustrates an auxiliary plane
graph $H$ with $n$ vertices, $\frac{3(n-2)}{2}$ edges and
$\frac{n-2}{2}$ faces of size $6$. In Figure~\ref{fig:6gon}, we
demonstrate how one can embed $8$ edges in the interior of a face of
size $6$, so that no interior edge is crossed more than three times.
This implies that if we embed this way $8$ edges in every face of
$H$, we will obtain a $3$-planar graph with $n$ vertices and exactly
$\frac{3(n-2)}{2} + 8 \cdot \frac{n-2}{2} = \frac{11n}{2}-11$
edges.\qed
\end{proof}

\begin{figure}[h!]
 \centering
	\begin{minipage}[b]{.19\textwidth}
		\centering
		\subfloat[\label{fig:construction}{}]
		{\includegraphics[width=\textwidth,page=1]{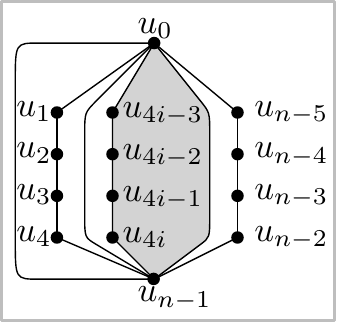}}
	 \end{minipage}
	 \begin{minipage}[b]{.19\textwidth}
		\centering
		\subfloat[\label{fig:6gon}{}]
		{\includegraphics[width=\textwidth,page=2]{images/construction}}
	 \end{minipage}
 \caption{Illustration of 
 (a)~the auxiliary plane graph $H$, and
 (b)~how to embed $8$ edges in a face of size $6$.}
 \label{fig:graphclass} 
\end{figure}

\section{Detailed Proofs from Section~\ref{sec:density}}
\label{app:proofs}

\rephrase{Lemma~\ref{lem:stick_cross}}{ \stickcross }
\begin{proof}
Recall that a stick is the part of an edge from one of its endpoints
towards to the nearest crossing-point with an edge of $G_p$. Hence, a
stick can potentially be further crossed within a face of $G_p$,
i.e., either by another stick or by a middle part of an edge that
passes through this face. Assume to the contrary that there exists a
stick of $\fs$ that is not crossed within $\fs$. W.l.o.g.~let
$(v_1,v_1')$ be the edge containing this stick and assume that
$(v_1,v_1')$ emanates from vertex $v_1$ and leads to vertex $v_1'$ by
crossing the edge $(v_i,v_{i+1})$ of $\fs$. Note that, in general,
$v_1'$ can also be a vertex of $\fs$. For simplicity, we will assume
that $(v_1,v_1')$ is drawn as a vertical line segment with $v_i$ to
the right of $(v_1,v_1')$ and $v_{i+1}$ to the left of $(v_1,v_1')$
as in Figure~\ref{fig:stick_cross_1}. Since $\fs$ is not triangular,
it follows that $i \neq 2$ or $i+1 \neq s$. Assume w.l.o.g.~that $i
\neq 2$.

We initially prove that $i+1=s$. First observe that if we can replace
$(v_1,v_1')$ with the chord $(v_1,v_i)$, then the maximal planar
substructure of the derived graph would have more edges than $G_p$;
contradicting the maximality of $G_p$. We make a remark
here\footnote{This remark will be implicitly used whenever we replace
an existing edge of $G$ with another one (and not explicitly stated
again), throughout this section.}. Edge $(v_1,v_i)$ potentially
exists in $G$ either as part of its planar substructure $G_p$
(because $\fs$ is not necessarily simple) or as part of $G-G_p$. In
the later case, the existence of $(v_1,v_i)$ in $G-G_p$ would deviate
the maximality of $G_p$ (as we showed that $(v_1,v_i)$ can be part of
$G_p$); a contradiction. In the former case, if chord $(v_1,v_i)$
that we introduced is homotopic to an existing copy of $(v_1,v_i)$ in
$G_p$, then $i=2$ must hold; a contradiction.  Hence, there exists an
edge, say $e_1$, that crosses $(v_i, v_{i+1})$ to the right of
$(v_1,v_1')$.

Similarly, if we can replace $e_1$ with the chord $(v_1,v_i)$, then
again the maximal planar substructure of the derived graph would have
more edges than $G_p$; again contradicting the maximality of $G_p$.
Thus, there also exists a second edge, say $e_2$, that crosses $(v_i,
v_{i+1})$ to the right of $e_1$. If $i+1 \neq s$, then a symmetric
argument would imply that $(v_i,v_{i+1})$ has five crossings; a clear
contradiction. Hence, $s=i+1$; see Figure~\ref{fig:stick_cross_2}.

\begin{figure}[t]
	\centering
    \begin{minipage}[b]{.19\textwidth}
        \centering
        \subfloat[\label{fig:stick_cross_1}{}]
        {\includegraphics[width=\textwidth,page=1]{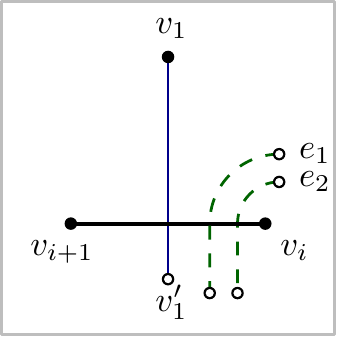}}
    \end{minipage}
    \begin{minipage}[b]{.19\textwidth}
        \centering
        \subfloat[\label{fig:stick_cross_2}{}]
        {\includegraphics[width=\textwidth,page=2]{images/stick_cross}}
    \end{minipage}
	\begin{minipage}[b]{.19\textwidth}
        \centering
        \subfloat[\label{fig:stick_cross_3}{}]
        {\includegraphics[width=\textwidth,page=3]{images/stick_cross}}
    \end{minipage}
	\begin{minipage}[b]{.19\textwidth}
        \centering
        \subfloat[\label{fig:stick_cross_4}{}]
        {\includegraphics[width=\textwidth,page=4]{images/stick_cross}}
    \end{minipage}
    \caption{Different configurations used in Lemma~\ref{lem:stick_cross}.    
    Black edges belong to $G_p$.
    Blue and red edges correspond to sticks and middle parts of $\fs$.
    Green dashed ones are sticks or middle parts of $\fs$.}
    \label{fig:stick_cross}
\end{figure}

We now claim that $e_1$ is not a stick emanating from $v_1$. For a
contradiction, assume that $e_1$ is indeed a stick from
$v_1$. Then, we could replace $e_2$ with the chord $(v_1,v_{s-1})$,
and therefore obtain a graph whose maximal planar substructure has
more edges than $G_p$; contradicting the maximality of $G_p$.
Similarly, $e_2$ is not a stick from $v_1$ (by their definition,
$e_1$ and $e_2$ are not sticks from $v_s$, either).

We now claim that we can remove edges $e_1$, $e_2$ and $(v_1,v_1')$
from $G$ and replace them with the chord $(v_1,v_{s-1})$ and two
additional edges that are both sticks either at $v_1$ or at $v_s$,
as illustrated in Figures~\ref{fig:stick_cross_3} and
\ref{fig:stick_cross_4}, respectively. Indeed, if both
configurations are not possible, then $e_1$ and $e_2$ are homotopic.
Hence, we have obtained a new graph, whose maximal planar
substructure has more edges than $G_p$, which contradicts the
maximality of $G_p$.\qed
\end{proof}

\rephrase{Lemma~\ref{lem:pt_no_far}}{ \ptnofar }
\begin{proof}
For a proof by contradiction, assume that $(u,u')$ is an edge that
defines a middle part of $\fs$ which crosses two non-consecutive
edges of $\fs$, say w.l.o.g.~$(v_1,v_2)$ and $(v_i,v_{i+1})$, where
$i \neq 2$ and $ i+1 \neq s$. As in the proof of
Lemma~\ref{lem:stick_cross}, we will assume for simplicity that
$(u,u')$ is drawn as a vertical line-segment, while $(v_1,v_2)$ and
$(v_i,v_{i+1})$ as horizontal ones, such that $v_1$ and $v_{i+1}$ are
to the left of $(u,u')$ and $v_2$ and $v_i$ to its right. Note that
this might be an oversimplification, if e.g., $v_1$ is identical to
$v_{i+1}$. Clearly, each of $(v_1,v_2)$ and $(v_i,v_{i+1})$ are
crossed by at most two other edges. Let $e_1$, $e_1'$ be the edges
that potentially cross $(v_1,v_2)$ and $e_2$, $e_2'$ the ones that
potentially cross $(v_i,v_{i+1})$. Note that we do not make any
assumption in the order in which these edges cross $(v_1,v_2)$ and
$(v_i,v_{i+1})$ w.r.t.~the edge $(u,u')$; see
Figure~\ref{fig:pt_no_far_1}. Note also that neither $e_1$ nor $e_1'$
can have more than one crossing above $(v_1,v_2)$, as otherwise they
would form sticks of $\fs$ that are not crossed within $\fs$, which
would lead to a contradiction with Lemma~\ref{lem:stick_cross}.
Similarly, $e_2$ and $e_2'$ cannot have more than one crossing below
$(v_i,v_{i+1})$.

First, we consider the case where $(u,u')$ is not involved in
crossings in the interior of $\fs$. Hence, $(u,u')$ can have at most
one additional crossing, either above $(v_1,v_1')$ or below
$(v_i,v_{i+1})$, say w.l.o.g.~below $(v_i,v_{i+1})$. In this case,
we remove edges $(u,u')$, $e_1$, $e_1'$, $e_2$ and $e_2'$ from $G$
and we replace them by the following edges (see also
Figure~\ref{fig:pt_no_far_2}):
\begin{inparaenum}[(a)]
\item the edge from $u$ to $v_{i}$,
\item the edge from $u$ to $v_{i+1}$,
\item the edge from $v_1$ to the endpoint below $(v_i,v_{i+1})$ of
the removed edge that used to cross $(v_i,v_{i+1})$ leftmost,
\item the edge from $v_2$ to the endpoint below $(v_i,v_{i+1})$ of
the removed edge that used to cross $(v_i,v_{i+1})$ rightmost,
\item the edge from $u$ to the endpoint below $(v_i,v_{i+1})$ of
the remaining removed edge that used to cross $(v_i,v_{i+1})$.
\end{inparaenum}
Observe that the maximal planar substructure of the derived graph has
more edges than $G_p$, since it contains edges $(u,v_i)$ and
$(u,v_{i+1})$, instead of edge $(v_1,v_2)$, which contradicts the
maximality of $G_p$.

\begin{figure}[t]
	\centering
    \begin{minipage}[b]{.19\textwidth}
        \centering
        \subfloat[\label{fig:pt_no_far_1}{}]
        {\includegraphics[width=\textwidth,page=1]{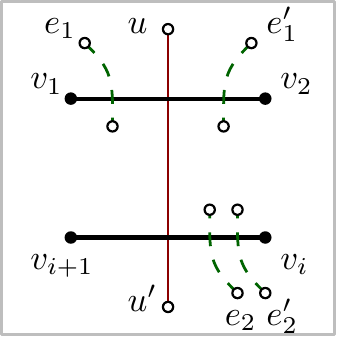}}
    \end{minipage}
    \begin{minipage}[b]{.19\textwidth}
        \centering
        \subfloat[\label{fig:pt_no_far_2}{}]
        {\includegraphics[width=\textwidth,page=2]{images/pt_no_far}}
    \end{minipage}
	\begin{minipage}[b]{.19\textwidth}
        \centering
        \subfloat[\label{fig:pt_no_far_3}{}]
        {\includegraphics[width=\textwidth,page=3]{images/pt_no_far}}
    \end{minipage}
	\begin{minipage}[b]{.19\textwidth}
        \centering
        \subfloat[\label{fig:pt_no_far_4}{}]
        {\includegraphics[width=\textwidth,page=4]{images/pt_no_far}}
    \end{minipage}
    \caption{Different configurations used in Lemma~\ref{lem:pt_no_far}.}
    \label{fig:pt_no_far}
\end{figure}

To complete the proof, it remains to lead to a
contradiction the case where $(u,u')$ is crossed by an edge, say $e$,
within $\fs$; see Figure~\ref{fig:pt_no_far_3}. Observe that edge
$(u,u')$ can be crossed neither above $(v_1,v_1')$ nor below
$(v_i,v_{i+1})$. We proceed to remove $e$, $e_1$, $e_1'$, $e_2$ and
$e_2'$ from $G$ and we replace them by the edges $(v_2,v_{i+1})$,
$(u,v_{i+1})$, $(u,v_i)$, $(u',v_1)$ and $(u',v_2)$, respectively;
see Figure~\ref{fig:pt_no_far_4}. The planar substructure of the
derived graph has more edges than $G_p$; a contradiction.\qed
\end{proof} 

\rephrase{Lemma~\ref{lem:stick_exist}}{ \stickexist }
\begin{proof}
For a proof by contradiction, assume that $\fs$ has no sticks. By
Lemma~\ref{lem:uncrossed_edges}, it follows that there exist at least
two incident edges of $\overline{br}(\fs)$ that are crossed by
passing through edges of $\fs$, say w.l.o.g.~$(v_s,v_1)$ and
$(v_1,v_2)$. Note that these two edges are not bridges of $\fs$.  We
remove all passing through edges of $\fs$ and we add several new
edges in $\fs$; see also Figure~\ref{fig:stick_exist_1}.
As in the proof of Lemma~\ref{lem:uncrossed_edges}, we introduce
$s-3$ edges $\{(v_1,v_i):~2<i<s\}$ that lie completely in the
interior of $\fs$. Let $e_i=(v_i,v_{i+1})$, $2 < i < s$ be an edge of
$\overline{br}(\fs)$, other than $(v_s,v_1)$ and $(v_1,v_2)$, that
was crossed by a passing through edge of $\fs$. Let also $u_i$ be the
vertex associated with this particular edge. Then, we can introduce
edge $(v_1,u_i)$ in $G$ by maintaining $3$-planarity as follows: we
draw this edge starting from $v_1$ and between edges $(v_1,v_i)$ and
$(v_1,v_{i+1})$, towards the crossing point along $e_i$ and then we
follow the part of the passing through edge associated with $e_i$
towards $u_i$. Hence, potential parallel edges are not homotopic. In
the same way, we introduce two more edges starting from $v_3$ and
$v_{s-1}$ towards to the two vertices associated with $(v_1,v_2)$ and
$(v_1,v_s)$, respectively (recall that both $(v_1,v_2)$ and
$(v_1,v_s)$ were initially involved in crossings).

Since $\widehat{s_b}$ is the number of edges of $\overline{br}(\fs)$
that initially were not crossed by any passing through edge of $\fs$,
in total we have introduced $s-3+\overline{s_b}-\widehat{s_b}$
edges (recall that $s=2s_b+\overline{s_b}$). Since every edge of
$\overline{br}(\fs)$ can be crossed at most three times and each
passing through edge of $\fs$ crosses two edges of
$\overline{br}(\fs)$, it follows that initially we removed at most
$\lfloor \frac{3}{2}(\overline{s_b}-\widehat{s_b}) \rfloor$ edges.
This implies that as long as $s+\widehat{s_b} + 2s_b \geq 6$, the
resulting graph is larger or of equal size as $G$ but with larger
planar substructure. In the case where $s+\widehat{s_b} +2s_b = 5$
(that is, $s=5$ and $\widehat{s_b} = s_b=0$ or $s=4$,
$\widehat{s_b}=1$ and $s_b=0$), the resulting graph is again of equal
size as $G$ but with larger planar~substructure. Both cases, of
course, contradict either the optimality of $G$ of the maximality of
$G_p$.

To complete the proof of this lemma, it remains to lead to a
contradiction the case, where $s+\widehat{s_b}+2s_b=4$. Since $\fs$
is not triangular, $s=4$ and $\widehat{s_b}=s_b=0$ follows.  Recall
that in this case $\fs$ initially consisted of four edges, each of
which was crossed exactly three times by some passing through edges
(out of six in total). Let $R_i$ be the set of all possible vertices
that can be associated with $(v_i,v_{i+1})$, $i=1,\ldots,4$. Clearly,
$1 \leq |R_i| \leq 3$. Let also $u_i$ be a vertex of $R_i$.
By Lemma~\ref{lem:pt_no_far} it follows that all passing through
edges with an endpoint in $R_i$ have their other endpoint in
$R_{i+1}$ or in $R_{i-1}$. Suppose first, for some $i=1,\ldots,4$,
that all passing through edges with an endpoint in $R_i$ have their
other endpoint in $R_{i+1}$ and not in $R_{i-1}$. In this scenario,
however, it is clear that edge $(v_i,v_{i+2})$ can be safely added to
$G$ without destroying its $3$-planarity, which of course contradicts
the optimality of $G$ (see Figure~\ref{fig:stick_exist_4}). Hence,
for every $i=1,\ldots,4$ there exists a passing through edge with an
endpoint in $R_i$ and its other endpoint in $R_{i+1}$.
To cope with this case, we replace all passing through edges of $\fs$
with the edges of the configuration illustrated either in
Figure~\ref{fig:stick_exist_2} or \ref{fig:stick_exist_3}. Both
configurations are suitable in this case. Additionally, the presence
of $(v_2,v_4)$ or $(v_1,v_3)$, respectively, leads to a contradiction
the maximality of the planar substructure. Observe that edges
$(u_1,u_3)$ and $(u_2,u_4)$ are both involved in three crossings
each. This implies that both configurations might be forbidden (due
to $3$-planarity), in the case where all passing through edges that
initially emanated, say w.l.o.g.~from each vertex of $R_1$ and each
vertex of $R_2$, had crossings outside $\fs$. This implies, however,
that initially there was no passing through edge of $\fs$ from a
vertex of $R_1$ to a vertex of $R_2$ (as such an edge would have four
crossings); a contradiction.\qed
\end{proof}

\begin{figure}[t]
	\centering
    \begin{minipage}[b]{.19\textwidth}
        \centering
        \subfloat[\label{fig:stick_exist_1}{}]
        {\includegraphics[width=\textwidth,page=1]{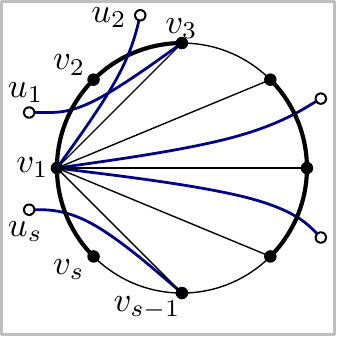}}
    \end{minipage}
	\begin{minipage}[b]{.19\textwidth}
        \centering
        \subfloat[\label{fig:stick_exist_4}{}]
        {\includegraphics[width=\textwidth,page=4]{images/stick_exist}}
    \end{minipage}
    \begin{minipage}[b]{.19\textwidth}
        \centering
        \subfloat[\label{fig:stick_exist_2}{}]
        {\includegraphics[width=\textwidth,page=2]{images/stick_exist}}
    \end{minipage}
	\begin{minipage}[b]{.19\textwidth}
        \centering
        \subfloat[\label{fig:stick_exist_3}{}]
        {\includegraphics[width=\textwidth,page=3]{images/stick_exist}}
    \end{minipage}	
    \caption{%
    Different configurations used in Lemma~\ref{lem:stick_exist}.}
    \label{fig:stick_exist}
\end{figure}

\rephrase{Lemma~\ref{lem:stick_cross_once}}{ \stickcrossonce }
\begin{proof}
By Lemma~\ref{lem:stick_cross}, each stick of $\fs$ is crossed at
least once within $\fs$. For a proof by contradiction, assume that
there exists a stick of $\fs$ that is crossed twice within $\fs$ (by
edges $e_1$ and $e_2$; see Figure~\ref{fig:stick_cross_once_1}).
W.l.o.g.~let $(v_1,v_1')$ be the edge containing this stick and
assume that $(v_1,v_1')$ emanates from vertex $v_1$ and leads to
vertex $v_1'$ by crossing the edge $(v_2,v_3)$ of $\fs$, that is,
$(v_1,v_1')$ forms a right stick of $\fs$ (recall that by
Lemma~\ref{lem:stick_no_far}, each stick of $\fs$ is short).

First, we show that $e_1$ and $e_2$ cannot cross in $\fs$.
Assume to the contrary that this is not the case, namely, $e_1$
crosses $e_2$ in $\fs$; see Figure~\ref{fig:stick_cross_once_1}.
Since $e$, $e_1$ and $e_2$ mutually cross in $\fs$, both $e_1$ and
$e_2$ have two crossings within $\fs$. It follows that neither $e_1$
nor $e_2$ passes through $\fs$, or equivalently, that both $e_1$ and
$e_2$ form sticks of $\fs$. This, however, contradicts
Lemma~\ref{lem:stick_no_three}, as $e$, $e_1$ and $e_2$ define three
mutually crossing sticks of $\fs$. Before we continue, we make two
useful remarks:%

\begin{enumerate}[{R}.1.] 
\item \label{rem:1} Let $\mathcal{F}'$ be the face of $G_p$ that
shares edge $(v_2,v_3)$ with $\fs$. Since $e$ has already three
crossings within $\fs$, it follows that $v'_1$ is a vertex of
$\mathcal{F}'$. For face $\mathcal{F}'$, edge $e$ forms an uncrossed
stick. Hence, $\mathcal{F}'$ is triangular and $\mathcal{F}'
\neq \fs$ (refer to the gray-colored face of
Figure~\ref{fig:stick_cross_once_1}).

\item \label{rem:2} Assume that either $e_1$ or $e_2$, say
w.l.o.g.~$e_1$, is passing through $\fs$. By
Lemma~\ref{lem:pt_no_far}, it follows that $e_1$ is crossing either
$(v_1,v_s)$ or $(v_2,v_3)$ of $\fs$. We claim that $e_1$ cannot cross
$(v_2,v_3)$. For a proof by contradiction, assume that this is not
the case. If $e_1$ passes through $\mathcal{F}'$, then $e_1$ would
have at least four crossings in the drawing of $G$; a contradiction.
So, $v_1'$ is an endpoint of $e_1$. However, in this case, $e_1$ and
$(v_1,v_1')$ would not cross in the initial drawing of $G$; a
contradiction. Hence, $e_1$ is crossing $(v_1,v_s)$ of $\fs$. Let
w.l.o.g.~$e_1=(u,v)$. Arguing similarly with Remark~R.\ref*{rem:1},
we can show that  edges $(v_1,v_s)$ and $(v_1,v_2)$ belong to two
triangular faces in $G_p$ with $u$ and $w$ as third vertex,
respectively (see Figure~\ref{fig:stick_cross_once_2}). Hence, $e_2$
cannot simultaneously pass through $\fs$. We distinguish two
cases depending on whether $e_1$ passes through $\fs$ or not.
\end{enumerate}

\begin{description}
\item[-] \emph{Edge $e_1$ passes through $\fs$}; see
Figure~\ref{fig:stick_cross_once_2}.
By $3$-planarity, there are at most two more edges, say $f_1$,
$f_2$, that cross edge $(v_1,v_s)$ and at most two more edges, say
$g_1$, $g_2$, that cross $(v_2,v_3)$. We remove these edges from $G$
as well as edges $e_1$ and $e_2$, i.e., a total of at most $6$
edges, and we replace them with the edges $(u,v_2)$, $(u,v_1')$,
$(u,v_3)$, $(v_1',v_s)$ $(v_1,v_3)$ and $(v_3,v_s)$; see
Figure~\ref{fig:stick_cross_once_3}. If $s>4$ or one among $f_1$,
$f_2$, $g_1$ and $g_2$ is not present in $G$, then the derived graph
has at least as many edges as $G$ but its maximal planar
substructure has two more edges, i.e., $(v_1,v_3)$ and $(v_3,v_s)$,
contradicting the maximality of $G_p$.

\begin{figure}[t] 
    \centering
	\begin{minipage}[b]{.19\textwidth}
        \centering
        \subfloat[\label{fig:stick_cross_once_1}{}]
        {\includegraphics[width=\textwidth,page=1]{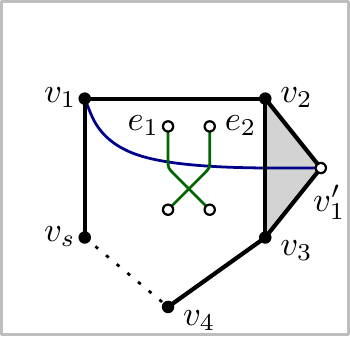}}
    \end{minipage}
	\begin{minipage}[b]{.19\textwidth}
        \centering
        \subfloat[\label{fig:stick_cross_once_2}{}]
        {\includegraphics[width=\textwidth,page=2]{images/stick_cross_once}}
    \end{minipage}
    \begin{minipage}[b]{.19\textwidth}
        \centering
        \subfloat[\label{fig:stick_cross_once_3}{}]
        {\includegraphics[width=\textwidth,page=3]{images/stick_cross_once}}
    \end{minipage}
	\begin{minipage}[b]{.19\textwidth}
        \centering
        \subfloat[\label{fig:stick_cross_once_4}{}]
        {\includegraphics[width=\textwidth,page=4]{images/stick_cross_once}}
    \end{minipage}
	\begin{minipage}[b]{.19\textwidth}
        \centering
        \subfloat[\label{fig:stick_cross_once_5}{}]
        {\includegraphics[width=\textwidth,page=5]{images/stick_cross_once}}
    \end{minipage}
    \caption{%
    Different configurations used in Lemma~\ref{lem:stick_cross_once}: 
    The case where edge $e_1$ passes through $\fs$.}
    \label{fig:stick_cross_once_part_1}
\end{figure}

Consider now the case where edges $f_1$, $f_2$, $g_1$ and $g_2$ are
present in $G$ and $s=4$. In this case, edge $(v_3,v_s)$ exists in
$G$. By $3$-planarity, $f_1$ and $f_2$ cross $(v_1,v_4)$ below
$e_1$. Also, at least one of $g_1$ and $g_2$, say w.l.o.g.~$g_1$,
crosses $(v_2,v_3)$ bellow $(v_1,v_1')$, otherwise we could replace
$(v_1,v_1')$ with chord $(v_1,v_3)$, contradicting the maximality of
$G_p$. The second edge $g_2$ may cross $(v_2,v_3)$ either above
$(v_1,v_3)$ or below $(v_1,v_3)$; see
Figure~\ref{fig:stick_cross_once_2}.

We claim that $e_2$ and $(v_3,v_4)$ cannot cross. For a proof by
contradiction, assume that $e_2$ and $(v_3,v_4)$ cross. By
$3$-planarity, at most two of edges $f_1$, $f_2$ and $g_1$ can cross
$(v_3,v_4)$. Thus, at least one of them is a stick crossing $e_2$.
Since $e_2$ has already three crossings, it must be a stick of
$v_2$. This implies that exactly two of $f_1$, $f_2$ and $g_1$ cross
$(v_3,v_4)$. On the other hand, $g_2$ can cross neither $e_2$ nor
$(v_3,v_4)$. Hence, $g_2$ cannot exist; a contradiction.

Since $e_2$ and $(v_3,v_4)$ cannot cross, edge $e_2$ forms a stick
emanating either from $v_3$ or from $v_4$. In the later case, $e_2$
must cross $f_1$ and $f_2$, and therefore has at least four
crossings (as it also crosses $(v_1,v_1')$ and an edge of $\fs$ to
exit $\fs$); a contradiction.

From the above, it follows that $e_2$ forms a stick of $v_3$; see
Figure~\ref{fig:stick_cross_once_4}. In this case, $e_2$ crosses
with $(v_1,v_2)$, $(v_1,v_1')$ and $g_1$ (which crosses $(v_2,v_3)$
below $(v_1,v_1')$). Since $e_2$ has already three crossings, it
follows that $g_2$ crosses $(v_2,v_3)$ above $(v_1,v_1')$ and passes
through $\fs$. Also, $g_1$ cannot be a stick of $v_4$, as otherwise
it would cross with both $f_1$ and $f_2$ having more than three
crossings. So, $g_1$ crosses $(v_3,v_4)$ and passes through $\fs$.
Similarly to Remark~R.\ref*{rem:2}, we can show that $g_1$ joins
vertex $v'_1$ with a vertex, say $w'$, so that $w'$, $v_3$ and $v_4$
form a triangular face of $G_p$. It follows that vertices $v_1$,
$w$, $v_2$, $v'_1$, $v_3$, $w'$, $v_4$ and $u$ form an octagon in
$G_p$ with $4$ edges of $G_p$ in its interior and a total of $7$
more edges of $G-G_p$ that either lie entirely in the octagon or
pass through the octagon. We remove these $11$ edges from $G$ and
replace them with the corresponding ones of
Figure~\ref{fig:stick_cross_once_5} (which lie completely in the
interior of the octagon). In the derived graph, the octagon has
still a total of $11$ edges. However, $5$ of them belong to its
maximal planar substructure; a contradiction to the maximality of
$G_p$.

\item[-] \emph{Edge $e_1$ is a stick of $\fs$}. In this case, both
$e_1$ and $e_2$ form sticks of $\fs$ (by Remark~R.\ref*{rem:2}). By
Lemma~\ref{lem:stick_no_far} and by the fact that $e_1$ and $e_2$
cross $(v_1,v_1')$, \underline{$e_1$ and $e_2$ emanate from $v_2$,
$v_3$ or $v_s$.}

First, we will prove that \underline{neither $e_1$ nor $e_2$ forms a
stick of $v_3$}. For a proof by contradiction, assume that $e_2$
forms a stick of $v_3$; see Figure~\ref{fig:stick_cross_once_6}.
Since $e_1$ and $e_2$ do not cross, $e_1$ forms stick of either $v_3$
or $v_s$. In the former case, however, we can add edge $(v_1,v_3)$ to
$G$, contradicting its optimality. Therefore, edge $e_1$ forms a
stick of $v_s$. Edge $g_1$ crosses $(v_2,v_3)$ bellow $(v_1,v_1')$,
as otherwise we could replace $(v_1,v_1')$ with chord $(v_1,v_3)$
contradicting the maximality of $G_p$. It follows that $g_1$ also
crosses $e_2$. This implies that $g_1$ is a stick of $\fs$. Since
$e_2$ has three crossings, it follows that $e_2$ joins $v_3$ with a
vertex, say $v_3'$, so that $v_1$, $v_2$ and $v'_3$ form a triangular
face of $G_p$. By $3$-planarity, the third edge $g_2$ that
potentially crosses $(v_2,v_3)$ lies above $(v_1,v_1')$ and passes
through $\fs$. Also by $3$-planarity, there exists at most one other
edge $f_1$ that crosses $e_1$ and is a stick of $\fs$ (as shown in
the first part of the proof). Consider now the ``hexagon'' defined by
$v_s$ $v_1$, $v'_3$, $v_2$, $v'_1$ and $v_3$. It contains two or
three edges of $G_p$ (depending on whether $s>4$ or $s=4$,
respectively) and at most $5$ other edges. We remove them from $G$
and replace them with the corresponding ones of
Figure~\ref{fig:stick_cross_once_7}. The derived graph has at least
as many edges as $G$, but its planar substructure is larger than
$G_p$ (due to chord $(v_1,v_3)$); a contradiction to the maximality
of $G_p$. So, $e_1$ and $e_2$ are sticks of $v_2$ or~$v_s$.

\begin{figure}[t]
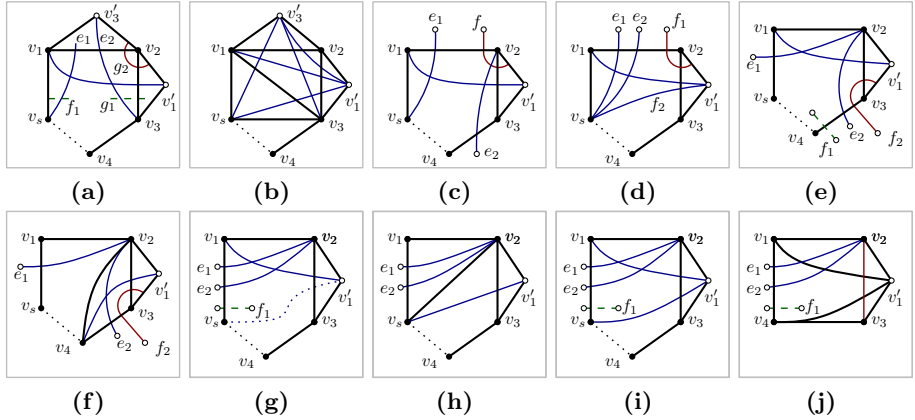

	\centering
	\begin{minipage}[b]{.19\textwidth}
        \centering
        \subfloat[\label{fig:stick_cross_once_6}{}]
        {\includegraphics[width=\textwidth,page=6]{images/stick_cross_once}}
    \end{minipage}
    \begin{minipage}[b]{.19\textwidth}
        \centering
        \subfloat[\label{fig:stick_cross_once_7}{}]
        {\includegraphics[width=\textwidth,page=7]{images/stick_cross_once}}
    \end{minipage}
    \begin{minipage}[b]{.19\textwidth}
        \centering
        \subfloat[\label{fig:stick_cross_once_8}{}]
        {\includegraphics[width=\textwidth,page=8]{images/stick_cross_once}}
    \end{minipage}	
    \begin{minipage}[b]{.19\textwidth} 
        \centering
        \subfloat[\label{fig:stick_cross_once_9}{}]
        {\includegraphics[width=\textwidth,page=9]{images/stick_cross_once}}
    \end{minipage}	
    \begin{minipage}[b]{.19\textwidth}
        \centering
        \subfloat[\label{fig:stick_cross_once_10}{}]
        {\includegraphics[width=\textwidth,page=10]{images/stick_cross_once}}
    \end{minipage}	
    \begin{minipage}[b]{.19\textwidth}
        \centering
        \subfloat[\label{fig:stick_cross_once_11}{}]
        {\includegraphics[width=\textwidth,page=11]{images/stick_cross_once}}
    \end{minipage}
    \begin{minipage}[b]{.19\textwidth}
        \centering
        \subfloat[\label{fig:stick_cross_once_12}{}]
        {\includegraphics[width=\textwidth,page=12]{images/stick_cross_once}}
    \end{minipage}	
    \begin{minipage}[b]{.19\textwidth}
        \centering
        \subfloat[\label{fig:stick_cross_once_13}{}]
        {\includegraphics[width=\textwidth,page=13]{images/stick_cross_once}}
    \end{minipage}	
    \begin{minipage}[b]{.19\textwidth}
        \centering
        \subfloat[\label{fig:stick_cross_once_14}{}]
        {\includegraphics[width=\textwidth,page=14]{images/stick_cross_once}}
    \end{minipage}	
    \begin{minipage}[b]{.19\textwidth}
        \centering
        \subfloat[\label{fig:stick_cross_once_15}{}]
        {\includegraphics[width=\textwidth,page=15]{images/stick_cross_once}}
    \end{minipage}
    \caption{%
	Different configurations used in Lemma~\ref{lem:stick_cross_once}: 
    The case where edge $e_1$ forms a stick of $\fs$.}
    \label{fig:stick_cross_once_part_2} 
\end{figure}

Next, we will prove that \underline{$e_1$ and $e_2$ emanate from the
same vertex of $\fs$}. For a proof by contradiction, assume that
$e_1$ is a stick of $v_s$ and $e_2$ is a stick of $v_2$; see
Figure~\ref{fig:stick_cross_once_8}. By
Lemma~\ref{lem:stick_no_three}, edge $e_2$ crosses edge $(v_3,v_4)$
of $\fs$. Now, there exists an edge $f$ that crosses $(v_1,v_2)$ to
the right of $e_1$, otherwise we could replace $e_1$ with chord
$(v_s,v_2)$ contradicting the maximality of $G_p$. This edge also
crosses $e_2$ and $(v_2,v_3)$, that is, $f$ passes through $\fs$.
Then, $e_2$ is a stick of $\fs$ that is crossed twice: by a stick
and a passing through edge. This case however cannot occur, since it
is covered by the first case of the lemma. So, $e_1$ and $e_2$ are
sticks of the same vertex of $\fs$. 

Next, we will prove that \underline{$e_1$ and $e_2$ do not form
sticks of $v_s$}; see Figure~\ref{fig:stick_cross_once_9}. As before,
there exists an edge $f_1$ that passes through $\fs$ and crosses
$(v_1,v_2)$ to the right of $e_2$ and $(v_2,v_3)$ above $(v_1,v_1')$,
as otherwise we could replace $e_2$ with chord $(v_s,v_2)$
contradicting the maximality of $G_p$. Similarly, there exists an
edge $f_2$ that crosses $(v_2,v_3)$ bellow $(v_1,v_1')$, as otherwise
we could replace $(v_1,v_1')$ with chord $(v_1,v_3)$ and lead to a
contradiction the maximality of $G_p$. We claim that $f_2$ is an edge
connecting $v_s$ with $v_1'$. First, we make the following
observation. Suppose that there is an edge that crosses $e_1$ and
$e_2$ within $\fs$. By Remark~\ref*{rem:1}, $e_1$ and $e_2$ are
homotopic; a contradiction. Therefore, no further edge
crosses $e_1$ and $e_2$. Now, if $f_2$ is not an edge
connecting $v_s$ with $v_1'$, then we can replace $(v_1,v_1')$ with
the edge $(v_s,v_1')$ and reduce the total number of crossings of $G$
by two, which of course contradicts the crossing minimality of $G$.
If $s > 4$, clearly we can add edge $(v_3,v_s)$ to $G$ and contradict
its optimality. Therefore, $s=4$ holds. In this case, $f_2$ is a
stick of $\fs$. Hence, by Lemma~\ref{lem:stick_cross} $f_2$ must be
crossed at least once within $\fs$, which is not possible in the
absence of chord $(v_1,v_3)$ because of the $3$-planarity.

It remains to prove that \underline{$e_1$ and $e_2$ do not form
sticks of $v_2$}. Assuming that $e_2$ crosses $(v_1,v_1')$ rightmost
(among $e_1$ and $e_2$), we consider two cases: $e_2$ forms a%
\begin{inparaenum}[(i)]
\item \label{st:1} right or
\item \label{st:2} left stick of $\fs$.
\end{inparaenum}

Case~(\ref*{st:1}) is illustrated in
Figure~\ref{fig:stick_cross_once_10}. In this case, there exists an
edge $f_1$ that crosses $(v_3,v_4)$ to the left of $e_2$, as
otherwise we could replace $e_2$ with chord $(v_2,v_4)$ contradicting
the maximality of $G_p$. Note that if $f_1= e_1$, then $e_1$ can be
replaced with chord $(v_2,v_4)$, again leading to a contradiction the
maximality of $G_p$. Analogously, there exists an edge $f_2$ that
crosses $(v_2,v_3)$ bellow $(v_1,v_1')$, as otherwise we could
replace $(v_1,v_1')$ with chord $(v_1,v_3)$, which would contradict
the maximality of $G_p$. By $3$-planarity, edge $f_2$ cannot cross
$e_2$. Hence, $f_2$ passes through $\fs$ and crosses $(v_3,v_4)$ to
the right of $e_2$. This implies that $e_1$ is a left stick and
crosses $(v_1,v_s)$. We proceed by removing $(v_1,v_1')$ and $f_1$
from $G$ and by replacing them with edge $(v_4,v'_1)$ and chord
$(v_2,v_4)$; see Figure~\ref{fig:stick_cross_once_11}. Note that this
replacement is legal, since we can show (as in the case where $e_1$
and $e_2$ do not form sticks of $v_s$) that $(v_2,v_3)$ is not
involved in any other crossing. The maximal planar substructure of
the derived graph is larger than $G_p$; a contradiction.

Case~(\ref*{st:2}) is illustrated in
Figure~\ref{fig:stick_cross_once_12}. In this case, both $e_1$ and
$e_2$ form left sticks of $v_2$. In addition, there exists an edge
$f_1$ that crosses $(v_1,v_s)$ bellow $e_2$, as otherwise we could
replace $e_2$ with chord $(v_2,v_s)$ contradicting the maximality of
$G_p$. In the absence of $(v_s,v'_1)$, we remove $(v_1,v_1')$ and
$f_1$ from $G$ and we replace them with $(v_s,v'_1)$ and chord
$(v_2,v_s)$. The maximal planar substructure of the derived graph has
more edges than  $G_p$, which contradicts its maximality. Hence,
$(v_s,v'_1)$ belongs to $G$; see
Figure~\ref{fig:stick_cross_once_14}. If $s>4$, then
$(v_s,v_1')$ forms a far stick of $\fs$, contradicting
Lemma~\ref{lem:stick_no_far}. Hence $s=4$. In this case,
we can remove $(v_2,v_3)$ from $G_p$ and add edges $(v_s,v_1')$ and
$(v_1,v'_1)$ to it, which again contradicts the maximality
of $G_p$; see Figure~\ref{fig:stick_cross_once_15}.\qed%
\end{description}
\end{proof}}%
}
\end{document}